\documentclass[twocolumn]{autart}    % Enable this line and disable the  preceding line to obtain a two-column  document whose style resembles the printed Automatica style.

\usepackage{graphicx}          

\usepackage{cite}
\usepackage{amsmath,amssymb,amsfonts}
\usepackage{epstopdf}

\usepackage{algorithm}
\usepackage{etoolbox}

\let\classAND\AND
\let\AND\relax
\usepackage{algorithmic}
\let\AND\classAND
\AtBeginEnvironment{algorithmic}{\let\AND\algoAND}

\begin{document}

\begin{frontmatter}
%\runtitle{Insert a suggested running title}  % Running title for regular  papers but only if the title   is over 5 words. Running title is not shown in output.

\title{Adaptive Tube MPC: Beyond a Common Quadratically Stabilizing Feedback Gain\thanksref{footnoteinfo}} % Title, preferably not more  than 10 words.

\thanks[footnoteinfo]{Corresponding author A.~Dey.}

\author{Anchita Dey}\ead{anchitadey.ee.india@gmail.com},    % Add the 
\author{Shubhendu Bhasin}\ead{sbhasin@ee.iitd.ac.in}               % e-mail address 

\address{Electrical Engineering Department, Indian Institute of Technology Delhi, Hauz Khas, New Delhi, Delhi 110016, India}  % Please supply % full addresses % here.

\begin{keyword}  
Model predictive control, Adaptive control, Control of constrained systems, Optimization under uncertainties. % Five to ten keywords,  
% Cicero; Catiline; orations.               % chosen from the IFAC 
\end{keyword}                             % keyword list or with the  help of the Automatica  keyword wizard

\begin{abstract}                          % Abstract of not more than 200 words.
\iffalse
In this work, we develop an adaptive tube framework for model predictive control (MPC) of discrete-time linear time-invariant systems subject to both parametric uncertainty and additive disturbances. Unlike conventional tube MPC methods that rely {\emph{on fixed tube geometry, tightened constraints and terminal set,}} on a fixed tube geometry, tightened constraints, and a terminal set, the proposed approach incorporates online parameter learning to refine the parametric uncertainty set, update parameter point estimates, and accordingly adapt the components of the constrained optimal control problem. The refined model, used for prediction, leads to redesigned feedback gains, terminal ingredients, and tube cross-sections as uncertainty decreases, thereby reducing conservatism while preserving recursive feasibility, stability and constraint satisfaction. Further, the proposed framework does not hinge on the existence of a common quadratically stabilizing linear feedback gain for the entire parametric uncertainty set, which is a common and yet restrictive assumption in standard tube-based designs.
%
%
%
%
\fi
This paper proposes an adaptive tube framework for model predictive control (MPC) of discrete-time linear time-invariant systems subject to parametric uncertainty and additive disturbances. In contrast to conventional tube-based MPC schemes that employ fixed tube geometry and constraint tightening designed for worst-case uncertainty, the proposed approach incorporates online parameter learning to progressively refine the parametric uncertainty set and update the parameter estimates. These updates are used to adapt the components of the MPC optimization problem, including the prediction model, feedback gain, terminal set, and tube cross-sections. As the uncertainty set contracts, the required amount of constraint tightening reduces and the tube shrinks accordingly, yielding less conservative control actions. Recursive feasibility, robust constraint satisfaction, and closed-loop stability are formally established. Furthermore, the framework does not require the existence of a common quadratically stabilizing linear feedback gain for the entire parametric uncertainty set, thereby relaxing a standard assumption in existing tube-based MPC formulations. Numerical examples illustrate the effectiveness of the proposed approach.

%197 words
%The prediction model, informed by the parametric uncertainty set update, leads to  
%Tube adaptation 
%In addition, since a single nominal system is used at each time for the COCP predictions, the terminal set is constructed using the time-varying nominal parameters and LQR feedback gain. As a result, the proposed theory does not hinge on the existence of a quadratically stabilizing feedback gain for the parametric uncertainty set. 
% Recursive feasibility, stability, and convergence for the proposed design are guaranteed, and a numerical example is provided for validation.
\end{abstract}

\end{frontmatter}

\section{Introduction}
Model predictive control (MPC) is an optimization-based control strategy that uses a system model to predict future system behavior and computes an optimal sequence of control inputs by minimizing a cost function of predicted states and inputs, subject to hard constraints. In the standard MPC formulation, the constrained optimal control problem (COCP) relies on an accurate model of the system dynamics for state prediction \cite{kouvaritakis2016model}. In the presence of modeling errors or disturbances, however, the true system evolution may deviate from the predicted trajectories, potentially leading to constraint violations and loss of recursive feasibility or stability. Since exact model knowledge is rarely available in practice, the problem of designing MPC schemes for systems affected by parametric uncertainty and additive disturbances has received considerable attention in the literature.

Robust MPC methods are mostly based on either min-max optimization or tube-based approaches and have been extensively studied in the literature. These methods are typically designed for systems with known parameters subject to unknown additive disturbances \cite{CHISCI20011019,langson2004robust,rakovic2012homothetic}, or for systems with parametric uncertainty, with or without additive disturbances \cite{fleming2015,KOTHARE19961361,buj2021}. In the latter case, however, the uncertain parameters are typically treated as unknown but fixed, and no mechanism is employed to learn or refine their estimates online. To reduce conservatism and improve performance, adaptive MPC schemes have recently been proposed \cite{jafari2017adaptive,dhar2021indirect,kohlerrampc,lorenzen2019robust,lu2023robust}, where the system parameters, the sets containing them, or both are updated using online data, while prediction errors are handled through robust tubes. 

Among robust MPC formulations, homothetic tube MPC \cite{langson2004robust,rakovic2012homothetic} provides a convenient parametrized structure for bounding prediction errors and formulating the COCP. Consequently, recent research has focused on reducing the conservatism associated with fixed tube constructions. For instance, \cite{luflexitube} proposes a flexible tube structure based on zonotopes, where the parameters defining the tube cross-sections are optimized online. The framework in \cite{kogelblending} considers systems with additive disturbances only and constructs tubes through an online-optimized blending of offline-computed invariant sets corresponding to different stabilizing feedback gains. Self-tuning tubes are proposed in \cite{tranosrusso} for systems with both parametric uncertainty and additive disturbances.

\iffalse
For systems with parametric uncertainties and constraints imposed on all states, most of the existing literature \cite{lorenzen2019robust,lu2023robust,anchita2025,anch,buj2021,jafari2017adaptive,dhar2021indirect,tranosrusso,kohlerrampc,luflexitube} assumes the existence of a quadratically stabilizing linear state feedback gain for the parametric uncertainty set, i.e., there exists a common feedback gain that can stabilize the origin of a linear time-invariant (LTI) system whose state and input matrices are formed from any element in the uncertainty set. However, this assumption substantially restricts the class of admissible uncertainty sets \cite{khargonekar1990robust,Gahinet1994,barmish1985necessary}. A general parametric uncertainty set may fail to be quadratically stabilizable, and even if quadratic stability can be established, a common linear feedback gain that stabilizes every element of the set may not exist \cite{Petersen1985}.
% One possible way out is to handle the parametric uncertainty as additive disturbance \cite[Sec.~5]{langson2004robust} with conservative upper bounds that may possibly lead to no feasible region for the COCP. 
The data-driven frameworks in \cite{berberich2020data,coulson2019data} are efficient alternative approaches that do not require a common feedback gain; however, owing to the offline formation of the Hankel matrix, their performance may degrade if there are sudden small changes in the system dynamics during online operation.
\fi

For systems with parametric uncertainty and constraints imposed on all states, most existing approaches \cite{lorenzen2019robust,lu2023robust,anchita2025,anch,buj2021,jafari2017adaptive,dhar2021indirect,tranosrusso,kohlerrampc,luflexitube} assume the existence of a common quadratically stabilizing linear state feedback gain for the entire uncertainty set \cite{khargonekar1990robust}. Specifically, it is assumed that there exists a common linear feedback gain that stabilizes the origin of a linear time-invariant (LTI) system whose state and input matrices correspond to any element of the parametric uncertainty set. This assumption, however, significantly restricts the class of admissible uncertainty sets \cite{Gahinet1994,barmish1985necessary}. In general, a parametric uncertainty set may not admit quadratic stabilization, and even when quadratic stability can be established, a linear feedback gain that stabilizes every element of the set may not exist \cite{Petersen1985}. Even in cases where such a common stabilizing gain exists, the construction of a common terminal set satisfying the state and input constraints for all admissible systems may still be infeasible or highly conservative, further restricting the practical applicability of such approaches.

% For systems with parametric uncertainty and constraints imposed on all states, most existing approaches \cite{lorenzen2019robust,lu2023robust,anchita2025,anch,buj2021,jafari2017adaptive,dhar2021indirect,tranosrusso,kohlerrampc,luflexitube} assume the existence of a common quadratically stabilizing linear state-feedback gain for the entire uncertainty set \cite{khargonekar1990robust}. Specifically, it is assumed that there exists a common feedback gain that stabilizes the origin of a linear time-invariant (LTI) system whose state and input matrices correspond to any element of the parametric uncertainty set. This assumption, however, significantly restricts the class of admissible uncertainty sets \cite{Gahinet1994,barmish1985necessary}. In general, a parametric uncertainty set may not admit quadratic stabilization, and even when quadratic stability can be established, a linear feedback gain that stabilizes every element of the set may not exist \cite{Petersen1985}.

Data-driven predictive control frameworks \cite{berberich2020data,coulson2019data} provide an alternative that avoids the need for a common stabilizing feedback gain. However, since these methods rely on an offline-constructed Hankel matrix derived from previously collected data, their performance may degrade if the system dynamics changes during online operation.

In this work, we propose the notion of \emph{adaptive tubes} for MPC of discrete-time LTI systems subject to both parametric uncertainty and additive disturbances. Specifically, we construct homothetic tubes whose cross-sectional geometry is updated online based on the evolving parameter point estimate and the refined parametric uncertainty set. The resulting adaptive tube framework avoids the common but restrictive assumption of the existence of a single quadratically stabilizing linear feedback gain for the entire uncertainty set \cite{lorenzen2019robust,lu2023robust,anchita2025,anch,buj2021,jafari2017adaptive,dhar2021indirect,tranosrusso,kohlerrampc,luflexitube}, thereby reducing conservatism.

To achieve this, the system matrices are decomposed into an estimated component and an unknown component, with the latter treated as part of a lumped additive disturbance. This representation enables the formulation of a tube-based COCP with a structure similar to that of standard homothetic tube MPC \cite{langson2004robust,rakovic2012homothetic}, while yielding improved constraint tightening through the use of separate forward reachable disturbance sets \cite{CHISCI20011019} rather than a single worst-case disturbance bound. The available state and input data are used to update both the parametric uncertainty set \cite{boydset} and the parameter point estimate through a projection-modified normalized gradient descent law \cite{ioannou2006adaptive}. The resulting reduction in parametric uncertainty leads to online recomputation of the tightened constraints, the tube cross-sectional shape, and the terminal set. Furthermore, the terminal ingredients are constructed using a feedback gain associated with the current parameter estimate rather than requiring a common gain for all elements of the uncertainty set. Although this adaptation modifies the underlying COCP and may affect feasibility, recursive feasibility is ensured by a backup formulation invoked when necessary. The proposed scheme also guarantees robust exponential stability \cite[Def.~2]{mayne2006robust} and boundedness of all signals.

The main contributions of this work are twofold. First, we introduce the notion of adaptive tubes for MPC of discrete-time LTI systems with parametric uncertainty and additive disturbances, where the tube cross-sections, terminal set and constraint tightening are updated online based on parameter learning and uncertainty set contraction. Second, the proposed framework avoids the commonly imposed assumption of the existence of a quadratically stabilizing linear feedback gain for the entire parametric uncertainty set by constructing terminal ingredients using the current parameter estimate. Further, we establish recursive feasibility, robust exponential stability of the origin, and boundedness of all closed-loop signals for the resulting adaptive tube MPC scheme.

\textit{Notations:} For two sets $\mathbb{A},\;\mathbb{B}\subseteq\mathbb{R}^{n}$, the Minkowski sum $\mathbb{A}\oplus\mathbb{B}\triangleq \{a+b\;|\;a\in\mathbb{A},\;b\in\mathbb{B}\}$ and $a\oplus\mathbb{B}\triangleq \{a+b\;|\; b\in\mathbb{B}\},$ where $a\in\mathbb{R}^{n}$, the Pontryagin difference $\mathbb{A}\ominus\mathbb{B}\triangleq\{a\;|\;a+b\in\mathbb{A}\;\forall b\in\mathbb{B}\}$, the set multiplication $\mathbb{A}\mathbb{B}\triangleq \{ ab\;|\;a\in\mathbb{A},\;b\in\mathbb{B}  \}$, where $a$, $b$ are of conformal dimensions. The convex hull of all elements in $\mathbb{B}$ is denoted by $\text{\textbf{co}} (\mathbb{B})$.  The notation $W\succ 0$ $(\succeq 0)$ implies the matrix $W$ is symmetric positive-definite (positive semi-definite). $||\cdot||_\infty$, $||\cdot||_2$ and $||\cdot||_F$ denote $\infty$-norm, $2$-norm and Frobenius norm, respectively, and $||b||^2_W\triangleq b^\top W b$ for a vector $b$. The notation $a_{i|t}$ represents the value of $a$ at time $t+i$ predicted at time $t$, and $(\cdot)^*$ on any term denotes its optimal value. A sequence $\{c(p),\;c(p+1),\;\dots,c(q-1),\;c(q)\}$ is represented as $\{c(i)\}_{i=p:q}$, and the set of all integers from $a$ to $b$ by $\mathbb{I}_a^b$. Any signal belonging to $\mathcal L_\infty$ implies it is bounded. A sequence vector $z_t$ is said to belong to $\mathcal{S}(m)$ if $\sum_{i=t}^{t+k}z^{\intercal}_iz_i\leq c_0mk+c_1$ $\forall \;t\in\mathbb{I}_1^\infty$, a given constant $m\geq 0$, and some $k\in\mathbb{I}_1^\infty$, where $c_0,\;c_1\geq0$ \cite[Theorem~4.11.2]{ioannou2006adaptive}.

\section{Problem Statement}\label{SecPS}
Consider the constrained discrete-time LTI system subject to external disturbance
\begin{align}
   & x_{t+1}=Ax_t+Bu_t+d_t, \label{osys1}\\
  &  x_t\in\mathbb{X}\text{ , }\;u_t\in\mathbb{U} \hspace{0.4cm}\forall t\in\mathbb{I}_0^\infty, \label{hc1}
\end{align}
where $x_t\in\mathbb{R}^n$, $u_t\in\mathbb{R}^m$, and $d_t\in\mathbb{R}^n$ denote the state, input, and external disturbance, respectively, at time $t$. The disturbance $d_t$ and the system parameter $\psi\triangleq \begin{bmatrix} A&B \end{bmatrix}\in\mathbb{R}^{n\times (n+m)}$ are unknown but belong to sets $\mathbb{D}$ and $\Psi$, respectively. The sets $\mathbb{X}$, $\mathbb{U}$ and $\mathbb{D}$ are known convex polytopes containing their origins in their respective interiors. The set $\Psi\subseteq \mathbb R^{n\times (n+m)}$ is the convex hull of known vertices $\psi^{[1]},\psi^{[2]},...,\psi^{[L]}$, where $L$ is a known finite positive integer. Further, for each $\hat\psi\triangleq\begin{bmatrix}
        \hat A & \hat B
    \end{bmatrix}\in\Psi$, $\exists$ a pair $(P,\;K)$ specific to $\hat \psi$ such that $P\succ 0$, and
    \begin{align}\label{dare1}
        P-(\hat A+\hat B K)^\top P (\hat A +\hat B K)-Q - K^\top R K \succeq 0,
    \end{align} for some given matrices $Q, R\succ 0$, where $P,\;Q\in\mathbb R^{n\times n}$, $R\in\mathbb R^{m\times m}$ and $K\in\mathbb R^{m\times n}$. This condition implies that each system corresponding to $\hat \psi\in\Psi$ is stabilizable and admits a stabilizing linear feedback gain $K$ with a quadratic Lyapunov function $x^\top P x$, which is a standard assumption in MPC design \cite{kouvaritakis2016model,rakovic2012homothetic,langson2004robust}.

The \emph{objective} is to design a suitable control $u_t$ that drives the state of the system \eqref{osys1} to its origin while ensuring that the hard constraints (\ref{hc1}) are satisfied despite the presence of the disturbance $d_t$. For the known dynamics ($A$ and $B$ are known) and disturbance-free case, classical approaches \cite{kouvaritakis2016model} could be used to solve the COCP $\mathbb P1$ in a receding horizon fashion.
\begin{align*}
    \mathbb{P}1:\;\;\;\;\;\;\min_{\mu_t}\;\sum_{i=0}^{N-1} \left(||x_{i|t}||^2_Q + ||u_{i|t}||^2_R \right)+||x_{N|t}||^2_P\;\;\; &\\
    \text{subject to}\;x_{0|t}=x_t, \;\;x_{i+1|t}=Ax_{i|t}+Bu_{i|t}\;\forall i\in\mathbb{I}_0^{N-1},&\\
    x_{i|t}\in\mathbb{X},\;\;u_{i|t}\in\mathbb{U}\;\;\forall i\in\mathbb{I}_0^{N-1},\;\;x_{N|t}\in\mathbb{X}_{TS}\subseteq \mathbb{X},&
\end{align*}
where $\mu_t\triangleq \{u_{i|t}\}_{i=0:N-1}$, and ${\mathbb{X}}_{{TS}}$ is the terminal set (refer to \cite[Ch.~2]{kouvaritakis2016model} for details). 

The classical COCP $\mathbb P1$ is not directly solvable in the presence of uncertainties in $A$, $B$ and $d_t$, and therefore requires reformulation. In the subsequent sections, motivated by set membership-based identification \cite{boydset,sasfi2023robust,lorenzen2019robust,lu2023robust} and homothetic tube-based MPC methods \cite{langson2004robust,rakovic2012homothetic}, we reformulate the COCP using an adaptive tube framework that is designed to reduce the uncertainty in $\psi$ at each time instant. 

% In the proposed method, we assume that the parametric uncertainty set $\Psi$ satisfies the following standard assumption which is useful for constraint tightening \cite{kohlerrampc,luflexitube,lorenzen2019robust,anch,darup2016computation,lu2023robust,tranosrusso}. 
% \begin{assum}\label{as1}
%     The set $\Psi\subseteq \mathbb R^{n\times (n+m)}$, that contains the true parameter $\psi$, is the convex hull of known vertices  $\psi^{[1]},\psi^{[2]},...,\psi^{[L]}$, where $L$ is some finite positive integer, with each element in $\Psi$ being controllable. 
%     \iffalse
%     Additionally, $\Psi$ is quadratically stabilizable, i.e., for given user-defined matrices $Q,\;R \succ 0$, $\exists$ $(P,\;K)$, where $P\succ 0\in\mathbb R^{n\times n}$ and $K\in\mathbb R^{m\times n}$ such that $\forall \begin{bmatrix}
%         \hat A & \hat B
%     \end{bmatrix}\in\Psi$
%     \begin{align}\label{quadPK}
%         P-(\hat A+\hat BK)^\top P (\hat A+ \hat B K)-Q- K^\top R K\succeq 0.
%     \end{align}
%     \fi
% \end{assum}
% \iffalse
% Assumption \ref{as1} is useful for constraint tightening, terminal set construction, and guaranteeing recursive feasibility and stability/convergence.
% \fi
Due to hard constraints on the system's states and inputs, the adaptive tube framework is equipped with a robust fallback mechanism to ensure constraint satisfaction regardless of parameter learning. The robust tube-based method serves as a backup if adaptation renders the modified COCP infeasible. We therefore start by developing the robust framework.  

\section{COCP components without adaptation}
A major motivation for this work is to relax the requirement of a common quadratically stabilizing linear feedback gain for the parametric uncertainty set $\Psi$, a common assumption in tube-based MPC frameworks for systems with uncertain parameters (see \cite{ lorenzen2019robust,lu2023robust,anchita2025,anch,buj2021,jafari2017adaptive,dhar2021indirect,tranosrusso,kohlerrampc,luflexitube}). To this end, we rewrite the system dynamics in \eqref{osys1} as follows.
\begin{align}\label{osys2}
    &x_{t+1}=\hat{A}_tx_t+\hat{B}_t u_t+\underbrace{{(A-\hat A_t)}x_t+{(B-\hat B_t)}u_t+d_t}_{=:w_t},
\end{align}
where the parameters $\hat A_t$, $\hat B_t$ are the estimates of $A$, $B$ obtained at time $t$ using a subsequently designed adaptive law and $\begin{bmatrix}
    \hat A_t & \hat B_t
\end{bmatrix}$ belong to the updated set $\Psi_t$. The term $w_t$ acts as a lumped additive disturbance. In the absence of parameter adaptation, $\hat A_t= \hat A_0$, $\hat B_t= \hat B_0$ and $ \Psi_t= \Psi_0$ $\forall t\in\mathbb I_0^\infty$, where $\Psi_0\triangleq \Psi$. Throughout this section, we use the suffix $t$ for easy extension to the adaptive case. %later express psi set with vertices

\subsection{Disturbance set characterization}
The knowledge of $\mathbb X$, $\mathbb U$, $\mathbb D$, $\Psi_t $ along with $\hat A_t$ and $\hat B_t$ is used to characterize the set $\mathbb W_t$ containing every possible value of $w_t$ $\forall t\in\mathbb I_0^\infty$.  Define 
\begin{subequations}
\begin{align}
   & \Phi_t \triangleq \left\{ \xi \in\mathbb{R}^{n\times{(n+m)}}\;\middle|\;\xi+\begin{bmatrix}
       \hat A_t & \hat B_t
   \end{bmatrix}\in \Psi_t \right\},\label{phit} \\
   &\Phi_{A_t} \triangleq \left\{ \xi\in\mathbb{R}^{n \times n}\;\middle|\;\begin{bmatrix}
       \xi & \eta
   \end{bmatrix}\in \Phi_t,\;\eta \in\mathbb{R}^{n\times m}  \right\},\label{phiAt}\\
 & \Phi_{B_t}  \triangleq \left\{ \xi\in\mathbb{R}^{n \times m}\;\middle|\;\begin{bmatrix}
       \eta & \xi
   \end{bmatrix}\in\Phi_t,\;\eta\in\mathbb{R}^{n\times n}  \right\}.\label{phiBt}
\end{align} \end{subequations}
Using definitions \eqref{phiAt} and \eqref{phiBt}, the set $\mathbb W_t$ is given by 
\begin{align}\label{Wmagset}
    \mathbb W_t\triangleq \Phi_{A_t}\mathbb X \oplus \Phi_{B_t} \mathbb U \oplus \mathbb D.
\end{align}
% which is also a convex polytope containing its origin in the interior.
A less conservative characterization of the sets containing $w_t$ is obtained by leveraging forward reachable sets of the state $x_t$ in the subsequent $N$ steps\footnote{We focus on only $N$ steps since the sets for $w_t$ are used in constraint tightening for the homothetic tube in a COCP with a prediction horizon $N$.}. At time $t$, the state measurement $x_t$ is used to compute the sets for $x_{t+1}, \; x_{t+2},\;x_{t+3},...,\;x_{t+N}$ by propagating the dynamics \eqref{osys1}. Since the sets for $x_{t+i}$ $\forall i\in\mathbb{I}_1^N$ depend on the current state $x_t$, we denote these as $\mathbb X_{i,t}$ where $i$ is the number of forward steps ahead of the current time $t$. The sets are constructed recursively as follows.
\begin{subequations}
\begin{align}
&\mathbb{X}_{0,t}\triangleq \{x_t\}, \label{Xreachset0}\\
   & \mathbb{X}_{i+1,t}\triangleq \left(\Psi_{A_t} \mathbb{X}_{i,t} \oplus \Psi_{B_t} \mathbb{U}\oplus\mathbb D \right) \cap \mathbb{X},\label{Xreachset_all}
   \end{align}
   \end{subequations}
where
\begin{subequations}
\begin{align}
&\Psi_{A_t}\triangleq \left\{ \xi\in\mathbb{R}^{n \times n}\;\middle|\;\begin{bmatrix}
       \xi & \eta
   \end{bmatrix}\in \Psi_t, \; \eta\in\mathbb{R}^{n\times m}  \right\},  \label{psiat}\\
   &\Psi_{B_t}\triangleq \left\{ \xi\in\mathbb{R}^{n \times m}\;\middle|\;\begin{bmatrix}
       \eta & \xi
   \end{bmatrix}\in \Psi_t, \; \eta\in\mathbb{R}^{n\times n}  \right\}.\label{psibt}
\end{align}
\end{subequations}
 We only require $\mathbb X_{i,t}$ $\forall i\in \mathbb{I}_0^{N-1}$. Using \eqref{phit}-\eqref{phiBt}, \eqref{Xreachset0}-\eqref{psibt}, the sets containing $w_{t+i}$ are given by
\begin{align}\label{WSet_all}
    \mathbb{W}_{i,t}\triangleq \Phi_{A_t} \mathbb X_{i,t}\oplus  \Phi_{B_t} \mathbb{U} \oplus \mathbb{D}\;\;\;\forall i\in\mathbb{I}_0^{N-1},
\end{align}
which are used to make robust state predictions in the MPC COCP.

\subsection{Terminal set construction}
Without adaptation, $\mathbb W_t\equiv \mathbb W_0$ and $\mathbb W_{i,t}\equiv \mathbb W_{i,0}$. While $\mathbb W_{i,t}$ contains the possible values of $w_{t+i}$ $\forall i\in\mathbb I_0^{N-1}$ starting at any given $t$, the set $\mathbb W_t$ is a superset of $\mathbb W_{i,t}$ and contains the possible values of $w_{t+i}$ $\forall i\in\mathbb I_0^\infty$. We, therefore, use $\mathbb W_t=\mathbb W_0$ to compute a suitable terminal set $\mathbb X_{TS_t}$ for formulating the COCP \cite{rakovic2012homothetic}, provided the following standard assumption holds.
\begin{assum}\label{Asdare1}
    Given constraint sets $\mathbb X$, $\mathbb U$, and $\mathbb W_0$, 
for each $\begin{bmatrix}
    \hat A & \hat B
\end{bmatrix}\in \Psi$, with the corresponding feedback gain 
$K(\hat A,\hat B)$ obtained from \eqref{dare1}, $\exists$ a non-empty 
terminal set $\mathbb X_{TS}(\hat A,\hat B)$ such that
\begin{subequations}\label{marpi1}
\begin{align}
&\mathbb X_{TS} \subseteq \mathbb X, \quad 
K\mathbb X_{TS}\subseteq \mathbb U, \\
&(\hat A+\hat B K)\,\mathbb X_{TS}
\oplus \mathbb W_0 \subseteq \mathbb X_{TS},
\end{align}
\end{subequations}
and $\mathbb X_{TS}$ contains the origin in its interior.
\end{assum}
The ideal choice of $\mathbb X_{TS_t}=\mathbb X_{TS_0}$ is the maximal admissible RPI set satisfying \eqref{marpi1}; this can be computed following \cite{dey2024computation} for $\hat A_t=\hat A_0$, $\hat B_t=\hat B_0$ and $K_t=K_0$. Note that the terminal set $\mathbb X_{TS_t}$ along with $K_t$ obtained from \eqref{dare1} are specific to a given parameter estimate ($\hat A_0$, $\hat B_0$), thereby relaxing the requirement for a common pair $(P_{com},\;K_{com})$ and a common terminal set for all the elements in $\Psi$, unlike \cite{lorenzen2019robust,lu2023robust,anchita2025,anch,buj2021,jafari2017adaptive,dhar2021indirect,tranosrusso,kohlerrampc,luflexitube}. Even when a common stabilizing feedback gain exists, constructing a common terminal set satisfying \eqref{marpi1} for all admissible systems can be difficult or may lead to excessive conservatism. This further motivates the use of terminal ingredients tailored to the current parameter estimate.

\subsection{Homothetic tube parameterization}
The tubes for state and control input obtained by solving the COCP at each time $t$ are denoted by 
    \begin{align}\label{CalTube}
   \mathcal{T}^{x}_t\triangleq \{ \mathbb{T}^{x}_{i|t} \}_{i=0:N}\;\text{ and }\;  \mathcal{T}^{u}_t\triangleq \{ \mathbb{T}^{u}_{i|t}\}_{i=0:N-1},
\end{align}
respectively. For the ease of solving the COCP, the tube-sections in $\mathcal{T}_t^{x}$ are defined as
\begin{align}
    \mathbb{T}^{x}_{i|t}\triangleq  \alpha_{i|t}\oplus \beta_{i|t}\mathbb{S}_t\subseteq {\mathbb{X}_{i,t}}\;\;\;\forall i\in\mathbb{I}_0^N ,\label{Tube1}
\end{align}
where $\alpha_{i|t}\in\mathbb{R}^n$ is the center, $\beta_{i|t}\geq 0$ is the scaling factor, and $\mathbb{S}_t$ determines the tube geometry. The set $\mathbb S_t$ is defined to be the convex hull of user-defined vertices $s^{[1]}_t,\;s^{[2]}_t,...,s^{[M_t]}_t$, where $M_t$ is finite. The state tube-sections can also be expressed as
\begin{subequations}\label{abarTube}
\begin{align}
     &  \mathbb{T}^{{x}}_{i|t}=\text{\textbf{co}} \left( \left\{ z^{[j]}_{i|t}\right\}_{j=1:M_t} \right), \text{ where} \label{Tube2}\\
&  z^{[j]}_{i|t}\triangleq \alpha_{i|t}+\beta_{i|t}s^{[j]}\;\;\;\forall (i,j,t)\in\mathbb{I}_0^N\times\mathbb{I}_1^{M_t}\times\mathbb{I}_0^\infty. \label{salphabeta}
\end{align}
\end{subequations}
Further, to achieve desirable guarantees of stability and recursive feasibility, a suitable choice of $\mathbb S_t$ \cite{rakovic2012homothetic,anchita2025} is that it contains the origin in its interior and is the outer RPI approximation of the minimal RPI set satisfying
\begin{align}\label{Stcompute}
    (\hat A_t+\hat B_t K_t)\mathbb S_t \oplus \mathbb W_t\subseteq \mathbb S_t.
\end{align}
Here, $K_t$ is the feedback gain obtained by solving \eqref{dare1} for $\hat A_t$, $\hat B_t$, and $\mathbb S_t$ can be computed following \cite{rakovic2005invariant}. Again, without any adaptation, we have $\mathbb S_t= \mathbb S_0$ $\forall t\in\mathbb I_0^\infty$.

Corresponding to \eqref{Tube2}, the control tube-sections are written as
\begin{align}
    \mathbb{T}^{u}_{i|t}\triangleq   \left\{ v^{[j]}_{i|t}\right\}_{j=1:M_t}  \;\;\forall (i,t)\in\mathbb{I}_0^{N-1}\times\mathbb{I}_0^\infty. \label{Tube3}
\end{align}
Since the tube-sections are expressed as convex hulls of vertices, any point $\eta\in\mathbb{T}^{{x}}_{i|t}$, can be expressed as $ \eta=\sum_{j=1}^{M_t} \tau^{[j]}_{(i,\;t)}z_{i|t}^{[j]}, \text{ where }\tau^{[j]}_{(i,\;t)}\in [0,1]$ and $ \sum_{j=1}^{M_t} \tau_{(i,\;t)}^{[j]}=1$.
% \begin{align}\label{convX}
%     \eta=\sum_{j=1}^{M_t} \tau^{[j]}_{(i,\;t)}z_{i|t}^{[j]}, \text{ where }\tau^{[j]}_{(i,\;t)}\in [0,1], \;\sum_{j=1}^{M_t} \tau_{(i,\;t)}^{[j]}=1
% \end{align}
The corresponding control input $u$ as a function of $\eta$ is given by the following function
\begin{align}\label{convU}
    u=\mathbf{u}(\eta,\mathbb T^x_{i|t},\mathbb T^u_{i|t})\triangleq\sum_{j=1}^{M_t}\tau^{[j]}_{(i,\;t)}v_{i|t}^{[j]}.
\end{align}
% This concludes the design of all the ingredients required to formulate and run a homothetic tube-based MPC COCP in the absence of any adaptation. 

\section{Adaptive tube MPC framework}
In this section, we introduce parameter learning and reformulate the COCP with recursive feasibility and stability guarantees. The dynamics used for developing the framework is given by \eqref{osys2} 
\begin{align*}
    x_{t+1}=\hat A_t x_t +\hat B_t u_t +w_t,%\tag{\ref{osys2}}
\end{align*}
where $w_t\in\mathbb W_t$ defined in \eqref{Wmagset}. The suffix $t$ was already introduced with the estimated parameters and sets characterized in the previous section to indicate their time-varying nature due to adaptation. Below, we provide the laws for carrying out the adaptation of the parameter estimates $\hat \psi_t\triangleq\begin{bmatrix} \hat A_t & \hat B_t\end{bmatrix}$ and the uncertainty set $\Psi_t$.

\subsection{Parameter learning and uncertainty set refinement}
At each time step $t\in\mathbb I_1^\infty$, the available data $x_{t}$, $x_{t-1}$, $u_{t-1}$ are leveraged to compute a non-falsified set for the true parameter $\psi=\begin{bmatrix}
    A&B
\end{bmatrix}$ as follows \cite{boydset} 
\begin{align}\label{nonfset}
    \Xi_{t}\triangleq \left\{ \begin{bmatrix}\hat A & \hat B\end{bmatrix}\;\middle|\; x_{t}-\hat Ax_{t-1}-\hat Bu_{t-1}\in\mathbb{D} \right\}.
\end{align}
The parametric uncertainty set is then updated as
\begin{align}\label{updatepsi}
\Psi_{t}\triangleq \Psi_{t-1} \cap \Xi_{t}\;\;\;\forall t\in\mathbb I_1^\infty;\;\;\Psi_0 =\Psi.
\end{align}
\begin{lem}\label{lemma1}
   For the plant dynamics \eqref{osys1}, the true parameter $\psi=\begin{bmatrix}
    A & B
\end{bmatrix}\in\Psi_t$ $\forall t\in\mathbb I_1^\infty$, where the uncertainty sets $\Psi_t$, obtained using \eqref{nonfset} and \eqref{updatepsi}, are nested, i.e., $\Psi_t\subseteq\Psi_{t-1}$ $\forall t\in\mathbb I_1^\infty$. This further implies that $\Psi_t\neq\emptyset$ at any time $t$.
\end{lem}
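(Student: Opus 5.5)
The argument is a direct induction on $t$ that uses only the plant dynamics \eqref{osys1}, the disturbance bound $d_t\in\mathbb D$, and the set definitions \eqref{nonfset}--\eqref{updatepsi}.

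\emph{Nestedness.} This part is immediate from the update law. Since $\Psi_t=\Psi_{t-1}\cap\Xi_t$ by \eqref{updatepsi}, we have $\Psi_t\subseteq\Psi_{t-1}$ for every $t\in\mathbb I_1^\infty$, whatever the data. No property of the true system is needed here.

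\emph{Membership of the true parameter.} First we show that $\psi\in\Xi_t$ for every $t\in\mathbb I_1^\infty$. Evaluate the residual in \eqref{nonfset} at $\begin{bmatrix}\hat A & \hat B\end{bmatrix}=\begin{bmatrix}A & B\end{bmatrix}$ using the measured data $x_t,x_{t-1},u_{t-1}$. By \eqref{osys1} with time index $t-1$,
\begin{align*}
x_t-Ax_{t-1}-Bu_{t-1}=d_{t-1}\in\mathbb D,
\end{align*}
so $\psi\in\Xi_t$.

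We then induct on $t$. The base case is $\psi\in\Psi_0=\Psi$, which holds by the problem statement. For the inductive step, assume $\psi\in\Psi_{t-1}$; combined with $\psi\in\Xi_t$, this gives $\psi\in\Psi_{t-1}\cap\Xi_t=\Psi_t$. Hence $\psi\in\Psi_t$ for all $t\in\mathbb I_0^\infty$, and in particular for all $t\in\mathbb I_1^\infty$.

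\emph{Non-emptiness.} This follows directly: $\Psi_t$ contains $\psi$ at every $t$, so $\Psi_t\neq\emptyset$.

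There is no real obstacle in this argument. The only point needing care is the index shift: the data pair $(x_t,\,(x_{t-1},u_{t-1}))$ is generated by the disturbance $d_{t-1}$, and we only need $d_{t-1}\in\mathbb D$, which the problem statement guarantees for all times. We should also note explicitly that the residual in the definition of $\Xi_t$ is computed from the true closed-loop measurements, since that is what makes $\psi$ consistent with the data.
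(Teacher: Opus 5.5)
Your proposal is correct and follows the same argument as the paper. Nestedness is immediate from the intersection in \eqref{updatepsi}, and $\psi\in\Psi_t$ follows by induction from $\psi\in\Psi_0$ together with $x_t-Ax_{t-1}-Bu_{t-1}=d_{t-1}\in\mathbb D$, which gives $\psi\in\Xi_t$ for every $t$; you only spell out the index shift and induction that the paper's one-line proof leaves implicit.
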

{\textit{Proof:}} The proof easily follows from the plant dynamics \eqref{osys1} and the non-falsified set definition in \eqref{nonfset} and the recursive update law \eqref{updatepsi}, since both $\Psi_0$ and $\Xi_t$ $\forall t\in\mathbb I_0^\infty$ contain the true parameter $\psi$.\hfill$\blacksquare$

Since the updated uncertainty set may yield a new set of vertices, we add a time index to the vertices of $\Psi_t$. 
\begin{gather}
    \begin{aligned}
    &\Psi_t=\textbf{co}\left( \left\{ \psi_t^{[i]} \right\}_{i=1:L_t}  \right)\;\;\forall t\in\mathbb{I}_0^\infty, \;\text{with }\\
    & \psi^{[i]}_0\triangleq \psi^{[i]}\;\;\;\forall i\in\mathbb{I}_1^{L_0},\;\text{where }L_0\triangleq L.
\end{aligned} 
\end{gather}
Using the updated uncertainty set $\Psi_{t}$, obtained with \eqref{updatepsi}, we compute the point estimate $\hat \psi_{t}$ using a normalized gradient descent-based law along with a projection operator that projects the estimates onto the uncertainty set $\Psi_{t}$. To this end, we rewrite \eqref{osys1} as
\begin{align}\label{osys1_gd}
    x_t=\psi g_{t-1} +d_{t-1} \;\;\forall t\in\mathbb I_1^\infty,
\end{align}
where $g_{t-1}\triangleq \begin{bmatrix}
        x_{t-1}^\top &u_{t-1}^\top
    \end{bmatrix}^\top\in\mathbb R^{n+m}$. Here \eqref{osys1_gd} is in a linear regression form with a perturbation-like term $d_{t-1}$, and at any time $t$, we know the terms $x_{t}$ and $g_{t-1}$. Accordingly the update law is given by  \begin{gather}  \label{psiadapt}
\begin{aligned}
    &\bar \psi_{t}=\hat \psi_{t-1} + \kappa \frac{(x_t-\hat \psi_{t-1}g_{t-1})g_{t-1}^\top}{1+g^\top_{t-1}g_{t-1}},  \\
    & \hat \psi_{t}=\begin{cases}
         \bar \psi_{t},\;\;\text{if }\bar \psi_{t}\in\Psi_{t}\\
         \underset{\xi\in\Psi_{t}}{\arg \min} \;|| \xi-\bar \psi_{t}||_F,\;\;\text{otherwise},
    \end{cases}
\end{aligned}
\end{gather}~where $\kappa\in(0,2)$.

\begin{lem}\label{lemma2}
    Let $e_t\triangleq ({x_t-\hat \psi_{t-1}g_{t-1}})/({1+g^\top_{t-1}g_{t-1}})$. The update law in \eqref{psiadapt} with $\kappa\in(0,2)$ guarantees
    \begin{itemize}
        \item $e_t$, $ e_t(1+g^\top_{t-1}g_{t-1})^{\frac{1}{2}}$, $\hat\psi_t$ $\in\mathcal L_\infty$,
        \item $e_t$, $ e_t(1+g^\top_{t-1}g_{t-1})^{\frac{1}{2}}$, $||\hat\psi_t-\hat\psi_{t-1}||_F$ $\in \mathcal S(\bar d^2)$, where $\bar d$ is the upper bound of $||d_t (1+g^\top_{t-1}g_{t-1})^{-\frac{1}{2}}||_2$.
    \end{itemize}
\end{lem}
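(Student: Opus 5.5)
We follow the standard Lyapunov-type analysis of the projection-modified normalized gradient law (cf.\ \cite[Theorem~4.11.2]{ioannou2006adaptive}), working with the Frobenius norm of the estimation error $\tilde\psi_t\triangleq\hat\psi_t-\psi$ and the candidate $V_t\triangleq\|\tilde\psi_t\|_F^2$. Write $m_{t-1}^2\triangleq 1+g_{t-1}^\top g_{t-1}$. From \eqref{osys1_gd}, $x_t-\hat\psi_{t-1}g_{t-1}=-\tilde\psi_{t-1}g_{t-1}+d_{t-1}$, so
\begin{align*}
e_t=\frac{-\tilde\psi_{t-1}g_{t-1}+d_{t-1}}{m_{t-1}^2},\qquad
\bar\psi_t=\hat\psi_{t-1}+\kappa e_t g_{t-1}^\top .
\end{align*}

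The first step handles the projection. By Lemma~\ref{lemma1}, $\psi\in\Psi_t$, and $\Psi_t$ is closed and convex, being an intersection of polytopes. Projection onto a closed convex set containing $\psi$ is nonexpansive toward $\psi$, so $\|\hat\psi_t-\psi\|_F\le\|\bar\psi_t-\psi\|_F$. Hence $V_t\le\|\bar\psi_t-\psi\|_F^2$, and the projection can only help. Moreover, $\hat\psi_t\in\Psi_t\subseteq\Psi$, which is a compact polytope, so $\hat\psi_t\in\mathcal L_\infty$ immediately.

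The second step is to bound the unprojected error. Expanding,
\begin{align*}
\|\bar\psi_t-\psi\|_F^2=V_{t-1}+2\kappa\,\mathrm{tr}\bigl(\tilde\psi_{t-1}^\top e_t g_{t-1}^\top\bigr)+\kappa^2\|e_t\|_2^2\,\|g_{t-1}\|_2^2 .
\end{align*}
The cross term equals $2\kappa e_t^\top\tilde\psi_{t-1}g_{t-1}=2\kappa e_t^\top(d_{t-1}-m_{t-1}^2e_t)$, and $\|g_{t-1}\|^2\le m_{t-1}^2$. This gives
\begin{align*}
V_t-V_{t-1}\le -\kappa(2-\kappa)\,m_{t-1}^2\|e_t\|^2+2\kappa\, m_{t-1}\|e_t\|\,\frac{\|d_{t-1}\|}{m_{t-1}} .
\end{align*}
We apply Young's inequality to the last term, absorbing half of the negative term, to get $V_t-V_{t-1}\le-c_a\,m_{t-1}^2\|e_t\|^2+c_b\bar d^2$ with $c_a=\kappa(2-\kappa)/2>0$; this is where $\kappa\in(0,2)$ is needed. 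Summing from $t$ to $t+k$ and using the boundedness of $V$ (from the compactness of $\Psi$) yields $\sum m^2\|e\|^2\le c_0\bar d^2k+c_1$, so $e_t m_{t-1}\in\mathcal S(\bar d^2)$. Since $m\ge1$, also $e_t\in\mathcal S(\bar d^2)$.

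The third step is boundedness. Here $\bar d$ is finite because $d$ lies in the compact set $\mathbb D$ and $m\ge1$. Further, $\tilde\psi_{t-1}g_{t-1}/m_{t-1}^2$ is bounded by $\|\tilde\psi_{t-1}\|_F$, which is bounded. Hence $e_t m_{t-1}$ and $e_t$ are in $\mathcal L_\infty$. Finally, by nonexpansiveness of the projection (here $\hat\psi_{t-1}\in\Psi_{t-1}$, not necessarily in $\Psi_t$), the increment needs a separate argument. The main obstacle is bounding $\|\hat\psi_t-\hat\psi_{t-1}\|_F$ when projection occurs, since $\hat\psi_{t-1}$ may not belong to $\Psi_t$. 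Our first approach is to bound the increment by $\|\hat\psi_t-\psi\|_F+\|\hat\psi_{t-1}-\psi\|_F$ is too crude. Instead we would retain the projection gain explicitly: use the inequality $\|\bar\psi_t-\psi\|^2\ge\|\hat\psi_t-\psi\|^2+\|\bar\psi_t-\hat\psi_t\|^2$, which follows from the obtuse-angle property of projection, so $\|\bar\psi_t-\hat\psi_t\|^2$ also appears with a negative sign in $V_t-V_{t-1}$ and is therefore in $\mathcal S(\bar d^2)$. Then $\|\hat\psi_t-\hat\psi_{t-1}\|\le\|\hat\psi_t-\bar\psi_t\|+\kappa\|e_t\|\,\|g_{t-1}\|\le\|\hat\psi_t-\bar\psi_t\|+\kappa m_{t-1}\|e_t\|$. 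Squaring and summing then gives $\|\hat\psi_t-\hat\psi_{t-1}\|_F\in\mathcal S(\bar d^2)$.
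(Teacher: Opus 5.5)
Your proof is correct, but it takes a more explicit route than the paper's proof, which is only a citation. The paper invokes the standard robustness result for the discrete-time projection-modified normalized gradient law \cite[Theorem~4.11.4]{ioannou2006adaptive} and \cite[Lemma~1]{anchita2025}, and uses Lemma~\ref{lemma1} only to supply $\psi\in\Psi_t$ for all $t$.

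That textbook result is stated for projection onto a fixed convex set. There, the increment bound $\|\hat\psi_t-\hat\psi_{t-1}\|_F\le\|\bar\psi_t-\hat\psi_{t-1}\|_F$ follows from nonexpansiveness because $\hat\psi_{t-1}$ already lies in the set. With the shrinking sets $\Psi_t$ this step is not available, and the paper does not spell out how it is handled.

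You resolve exactly this point. You retain $\|\bar\psi_t-\hat\psi_t\|_F^2$ from the obtuse-angle inequality, which is legitimate because $\Psi_t$ is a closed convex polytope containing $\psi$. This makes the projection correction itself an element of $\mathcal S(\bar d^2)$. You then bound the increment by the triangle inequality with $\|\hat\psi_t-\bar\psi_t\|_F+\kappa m_{t-1}\|e_t\|_2$.

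You also obtain $\hat\psi_t\in\mathcal L_\infty$ directly from $\hat\psi_t\in\Psi_t\subseteq\Psi$ with $\Psi$ compact, rather than from the Lyapunov inequality. This gives the uniform bound on $V_t$ needed in the telescoping sum.

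The citation route is shorter; yours is self-contained and makes the treatment of the time-varying projection set explicit.

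Two cosmetic points:
\begin{itemize}
\item For $e_t m_{t-1}\in\mathcal L_\infty$ the relevant bound is $\|\tilde\psi_{t-1}g_{t-1}\|_2/m_{t-1}\le\|\tilde\psi_{t-1}\|_F$, not the $m_{t-1}^2$-normalized one you wrote. It holds since $\|g_{t-1}\|_2\le m_{t-1}$.
\item You use $\bar d$ as a bound on $\|d_{t-1}\|_2/m_{t-1}$, whereas the statement indexes it as $\|d_t\|_2/m_{t-1}$. This is harmless, since $\max_{d\in\mathbb D}\|d\|_2$ bounds both.
\end{itemize}
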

{\textit{Proof: }}The proof follows from \cite[Lemma~1]{anchita2025}, \cite[Theorem~4.11.4]{ioannou2006adaptive} with the use of Lemma \ref{lemma1}. \hfill $\blacksquare$

\subsection{Reformulated COCP with adaptation}
Due to the adaptation of $\Psi_t$ and the parameter estimate $\hat\psi_t = \begin{bmatrix} \hat A_t & \hat B_t \end{bmatrix}$, the system dynamics in \eqref{osys2} and the associated sets defined in \eqref{phit}-\eqref{Wmagset} and \eqref{Xreachset_all}-\eqref{WSet_all} are updated whenever $\Psi_t$ and/or $\hat\psi_t$ change. Accordingly, the matrices $P_t$, $K_t$, the terminal set $\mathbb X_{TS_t}$ and the polytope $\mathbb S_t$ defining the tube cross-sectional shape are recomputed to remain consistent with the updated uncertainty description. While this introduces additional computational effort, it enables the controller to exploit any reduction in the size of $\Psi_t$, leading to the following dual benefits.
\begin{itemize}
\item The terminal set, which is desirably the maximal admissible RPI set, increases in size, implying the need for reduced control effort to reach the terminal set, and 
\item The tube cross-sectional shape $\mathbb S_t$ reduces in size, allowing for a better representation of the uncertainty in the propagation of the true state. 
\end{itemize}
The use of updated parameter estimates and uncertainty sets yields a modified COCP at each time instant. However, the switching from the old setup at $t-1$ to the new COCP at $t$ with updated $(P_{t},\;K_{t})$ and terminal set $\mathbb X_{TS_t}$ may affect recursive feasibility and stability. The treatment of feasibility under such switching is discussed in the next subsection. To ensure stability guarantees, we impose certain conditions on the choice of $(P_t,K_t)$. 

\begin{crit}\label{Crdare2}
    Given the parameters $\hat A_t$, $\hat B_t$, the pair $(P_{t-1}, K_{t-1})$ and matrices $Q,\;R$, where $Q,\;R,\;P_{t-1}\succ 0$, and constraint sets $\mathbb X$, $\mathbb U$ and $\mathbb W_t$, choose a pair $(P_t,K_t)$ that satisfies $P_t\succ 0$ and the following conditions.
    \begin{subequations}
        \label{subeqnTS}
        \begin{align}
    \text{(a)}\;\;&    P_t-(\hat A_t+\hat B_t K_t)^\top P_t (\hat A_t+\hat B_t K_t)\nonumber\\
    &\;\;\;\;\;\;\;\;\;\;\;\;\;\;\;\;\;\;\;\;\;\;\;\;\;\;\;\;-Q-K_t^\top R K_t \succeq 0 \label{darecri1}\\
\text{(b)}\;\;&    P_{t-1}-(\hat A_t+\hat B_t K_t)^\top P_t (\hat A_t+\hat B_t K_t)\nonumber\\
    &\;\;\;\;\;\;\;\;\;\;\;\;\;\;\;\;\;\;\;\;\;\;\;\;\;\;\;\;-Q-K_{t-1}^\top R K_{t-1} \succeq 0, \label{dare2cri}
    \end{align}
    \end{subequations}
    and, (c) $\exists$ a non-empty terminal set $\mathbb X_{TS_t}$ such that
    \begin{subequations}
        \begin{align}
        &\mathbb X_{TS_t}\subseteq\mathbb X,\;\;K_t\mathbb X_{TS_t}\subseteq\mathbb U,\tag{\ref{subeqnTS}c}\label{2ndterminala}\\
&        (\hat A_t+\hat B_t K_t)\mathbb X_{TS_t}\oplus \mathbb W_t\subseteq \mathbb X_{TS_t} \tag{\ref{subeqnTS}d},\label{2ndterminalb}
    \end{align}
    \end{subequations} and $\mathbb X_{TS_t}$ contains the origin in its interior.
\end{crit}

Condition (b) plays a crucial role in preserving stability under adaptation. Specifically, it ensures that the Lyapunov function, which is associated with the cost matrices $Q$, $R$, and $P_{t-1}$ at time $t-1$ and $P_t$ at time $t$, does not increase when the prediction model switches from $(\hat A_{t-1}, \hat B_{t-1})$ to $(\hat A_t, \hat B_t)$ and the feedback gain is updated from $K_{t-1}$ to $K_t$. In other words, even though the terminal cost and prediction dynamics are modified, the value function remains non-increasing along system trajectories, thereby preserving the stability guarantee across switching instances. In the absence of adaptation, i.e., when $(\hat A_t, \hat B_t) = (\hat A_{t-1}, \hat B_{t-1})$, Condition~(b) reduces to Condition~(a), and the requirement coincides with the standard discrete-time Lyapunov inequality for classical MPC \cite[Ch.~2]{kouvaritakis2016model}.

Conditions (a) and (c) follow from \eqref{dare1} as a property of $\Psi_0$ and Assumption~\ref{Asdare1}, respectively. Unlike the quadratic stabilizability assumption, which requires the existence of a common pair $(P_{{com}}, K_{{com}})$ satisfying $P_{{com}} - (\hat A + \hat B K_{{com}})^\top P_{{com}} (\hat A + \hat B K_{{com}})- Q - K_{{com}}^\top R K_{{com}} \succeq 0$, $\forall \begin{bmatrix}    \hat A & \hat B \end{bmatrix}\in \Psi$, together with a common terminal set, Condition~(b) only enforces a one-step compatibility between consecutive parameter updates. Thus, it avoids the need for a common Lyapunov function and feedback gain that are valid for the entire uncertainty set.

Provided Criterion \ref{Crdare2} holds, we compute the polytope $\mathbb S_t$ as the outer RPI approximation of the minimal RPI set satisfying \eqref{Stcompute}. 

\begin{rem}
If Criterion~\ref{Crdare2} does not hold, the point estimates $\hat A_t,\hat B_t$ and the pair $(P_t,K_t)$ remain unchanged. Nevertheless, adaptation may still occur through contraction of the uncertainty set.
\end{rem}

% To this end, we make the next assumption.
%  \begin{assum}\label{asKSet}
%    For given cost function weights $Q$ and $R$, the LQR gains obtained corresponding to each of the elements in the set $\Psi$ can be bounded by a known convex set $\mathbb{K}$ with a finite number of vertices $M$, i.e.,
%    \begin{align}
%   & \forall  \begin{bmatrix}
%        \hat A & \hat B
%    \end{bmatrix}\in\Psi, \;  - \left( R+\hat B^\top \hat P \hat B \right)^{-1}     \hat B^\top \hat P \hat A \in\mathbb K\triangleq  \nonumber\\
%    &\;\;\;\;\;\;\textbf{co}( \{ K^{[1]},K^{[2]},\;...,\;K^{[M]} \})
%    \end{align}
%    where $\hat P$ is the solution of the discrete algebraic Riccati equation for $\hat A$, $\hat B$. 
% \end{assum}
% \subsection{Reformulated COCP}

The COCP for adaptive tube MPC with decision variable $\bar \mu_t\triangleq \left\{ \{(\alpha_{i|t}, \beta_{i|t})\}_{i=0:N}, \left\{v_{i|t}^{[j]}\right\}_{i=0:N-1, j=1:M_t} \right\}$ is given by 
\begin{subequations}\label{MPC2}
\begin{align}
   & \mathbb{P}2_t:\;\min_{\bar\mu_t}\;\sum_{j=1}^{M_t} \left(\sum_{i=0}^{N-1} \left(|| z_{i|t}^{[j]}||^2_Q + || v_{i|t}^{[j]}||^2_R \right)+|| z_{N|t}^{[j]}||^2_{P_t} \right)\label{MPC2cost}\\
    &\text{subject to}\;\;\eqref{CalTube}-\eqref{abarTube}\nonumber\\
  &\beta_{i|t}\geq 0\;\;\;\forall i\in\mathbb{I}_{1}^{N},  \label{cons1m}\\
  & \mathbb T^x_{0|t}=\{x_t\}, \label{cons6m}\\
   % & \alpha_{N|t}=  ( \hat{A}_t+\hat{B}_tK_t ) \alpha_{N|t-1},\;\; \;t\in\mathbb{I}_1^\infty,  \tag{\ref{MPC2}b}\label{cons7m} \\
   % & \beta_{N|t}\leq \rho_{1_t}\beta_{N|t-1}+\rho_{2_t}, \;\;\;t\in\mathbb{I}_1^\infty , 
   % & \beta_{N|t}\mathbb S_t\subseteq (\hat A_{t}+\hat B_{t}K_t)\beta_{N|t-1}\mathbb S_t\oplus\mathbb W_t, \;\;\;t\in\mathbb{I}_1^\infty , \tag{\ref{MPC2}c}\label{cons8m} \\ 
  &\mathbb{T}^{{x}}_{i|t}\subseteq {\mathbb{X}},\text{ }\mathbb{T}^u_{i|t}\subseteq\mathbb{U}\;\;\;\forall i\in\mathbb{I}_{0}^{N-1} , \label{cons2m}\\ 
   & \mathbb{T}^{{x}}_{N|t}\subseteq {\mathbb{X}_{TS_t}}\subseteq \mathbb X, \text{ and}  \label{cons3m}\\
   % \end{align}\newpage\noindent
   % \begin{align}
   % & \mathbb T^x_{N|t}\subseteq (\hat A_{t}+\hat B_{t}K_t)\mathbb T^x_{N|t-1}\oplus\mathbb W_t, \;\;\;t\in\mathbb{I}_1^\infty , \label{cons4m}\\
    &\hat{A}_t z_{i|t}^{[j]}+\hat{B}_t v_{i|t}^{[j]}\in \mathbb{T}^{{x}}_{i+1|t}\ominus \mathbb{W}_{i,t} \; \forall (i,j)\in\mathbb{I}_{0}^{N-1}\times\mathbb{I}_1^{M_t}.  \label{cons5m}
\end{align}
\end{subequations}
Using the solution of the COCP and \eqref{convU}, we compute the input $u_t^*= \mathbf{u}(x_t,\mathbb T^{x^*}_{i|t},\mathbb T^{u^*}_{i|t})$ that is applied to the plant \eqref{osys1} to evolve to the next state $x_{t+1}$. The COCP is labelled as $\mathbb P2_t$ since the formulation depends on time $t$. 

It is possible that the new set $\Psi_t$, with the new estimates $\hat A_t, \;\hat B_t$ and the consequently computed sets, leads to an infeasible COCP. 
% It is important to note that there are possibilities for the new set $\Psi_t$ with the new estimates $\hat A_t, \;\hat B_t$ and the consequently computed sets to lead to an infeasible COCP. 
First, there may not exist a $(P_t,K_t)$ that satisfies Criterion \ref{Crdare2}. Second, even if Criterion \ref{Crdare2} holds, \eqref{MPC2} may be infeasible due to the inadequate length of the prediction horizon. In either case, we revert to the previous estimates by setting $\hat A_t\leftarrow \hat A_{t-1}, \hat B_t\leftarrow \hat B_{t-1}$, $P_t\leftarrow P_{t-1}$, $K_t\leftarrow K_{t-1}$ and 
\begin{align}\label{backpsi}
    \Psi_t\leftarrow \textbf{co}\left(\Psi_t, \left\{ \begin{bmatrix}
        \hat A_{t-1} & \hat B_{t-1}
    \end{bmatrix}\right\}\right),
\end{align}
and compute the sets in \eqref{phit}-\eqref{Wmagset}, \eqref{Xreachset_all}-\eqref{WSet_all}, $\mathbb X_{TS_t}$ and $\mathbb S_t$. At any time step, if there is a change in the set $\Psi_t$ or the estimates $\hat A_t$ $\hat B_t$, the shape of the disturbance-related sets along with the polytope $\mathbb S_t$ changes. This renders the tube an \textit{adaptive} nature, allowing for improved characterization of the uncertainties in trajectory propagation of the true system \eqref{osys1}. Algorithm \ref{algoatube} outlines the steps for implementing the proposed framework.

\begin{algorithm}[ht!]
\caption{Adaptive Tube MPC}\label{algoatube}
\begin{algorithmic}[1]
 \REQUIRE $\mathbb{X}$, $\mathbb{U}$, $\mathbb{D}$, $\psi^{[i]}$ $\forall i\in\mathbb{I}_1^{L_0}$, $N$, $Q$, $R$, $\hat \psi_0$, $\kappa$.
 \ENSURE  $\bar\mu_t^*$ $\;\forall t\in\mathbb{I}_0^\infty$.\\
\textbf{Steps:}\\
\STATE Initialize $\Psi_0=\Psi$, i.e., $\psi^{[i]}_0=\psi^{[i]}$ $\forall i\in\mathbb{I}_1^{L_0}$, a constant $c_{backup}=0$ and time $t=0$. 
\STATE Measure $x_0$.
\WHILE{$t\geq 0$}
\STATE Compute $\Phi_{t}, \Phi_{A_{t}}, \Phi_{B_{t}}, \mathbb W_{t}$, $\Psi_{A_t}$, $\Psi_{B_t}$, and $\mathbb X_{i,t}$, $\mathbb W_{i,t}$ $\forall i\in\mathbb I_0^{N-1}$ and $\mathbb S_{t}$ using \eqref{phit}-\eqref{WSet_all} and \eqref{Stcompute}.
\IF{$t==0$}
\STATE Choose $P_t$, $K_t$ and $\mathbb X_{TS_t}$ that satisfies \eqref{dare1} and Assumption \ref{Asdare1}. 
\ELSE
\STATE Check if $\exists$ $P_t$, $K_t$, $\mathbb X_{TS_t}$ satisfying Criterion \ref{Crdare2}.
\IF{Criterion \ref{Crdare2} does not hold}
\STATE Set $c_{backup}\leftarrow 1$, $t\leftarrow t-1$ and go to Step \ref{stepjump}.
\ENDIF
\ENDIF
% \STATE Compute $\mathbb S_{t+1}$ using \eqref{Stcompute}.\label{stepjump2}
\STATE Run the COCP $\mathbb P 2_t$.
\IF{$\bar\mu_t^*\neq\emptyset$, i.e., $\mathbb P 2_t$ is feasible}
 \STATE Set $c_{backup}\leftarrow0$.
 \ELSE
  \IF{$t==0$}
    \STATE Exit the algorithm (initially infeasible).
    \ELSE
    \STATE Set $c_{backup}\leftarrow1$ and $t\leftarrow t-1$.
    \ENDIF
\ENDIF
\IF{$c_{backup}==0$} 
  \STATE Apply $u_t^*= \mathbf{u}(x_t,\mathbb T^{x^*}_{i|t},\mathbb T^{u^*}_{i|t})$ to the plant \eqref{osys1}.
  \STATE Measure $x_{t+1}$ from \eqref{osys1}.
  \STATE Compute the non-falsified set $\Xi_{t+1}$, the uncertainty set $\Psi_{t+1}$, and the point estimate $\hat \psi_{t+1}$ using \eqref{nonfset}, \eqref{updatepsi}, and \eqref{psiadapt}, respectively.
\ENDIF  
%   \STATE Check if $\exists$ $P_{t+1}$, $K_{t+1}$ and $\mathbb X_{TS_{t+1}}$ satisfying Criterion \ref{Crdare2}.
%   \IF{Criterion \ref{Crdare2} does not hold}
%      \STATE Set $c_{backup}\leftarrow1$, and go to Step \ref{stepjump}.
%   \ENDIF
% \ENDIF
\IF{$c_{backup}==1$} \label{stepjump}
\STATE Compute $\Psi_{t+1}$ using \eqref{backpsi}, and set $\hat A_{t+1}\leftarrow\hat A_t$, $\hat B_{t+1}\leftarrow\hat B_t$, $\hat P_{t+1}\leftarrow\hat P_t$, $\hat K_{t+1}\leftarrow\hat K_t$. \label{stepbackup}
\ENDIF\label{stepjump2}
% \STATE Compute $\mathbb S_{t+1}$ using \eqref{Stcompute}.\label{stepjump2}
\STATE Update $t\leftarrow t+1$.
\ENDWHILE
\end{algorithmic}
\end{algorithm}
\begin{rem}
    It is possible that the number of vertices of $\Psi_t$ may increase with $t$. To ensure that the approach is tractable, one can either stop the set adaptation after the number of vertices $L_t$ reaches a pre-decided upper bound depending on the computational resources or opt for methods discussed in \cite{lorenzen2019robust,CHISCI20011019,Veres01011999}. 
\end{rem}
\begin{rem}
According to Algorithm~\ref{algoatube}, if $\mathbb P2_t$ is infeasible at some time $t\in\mathbb I_1^\infty$, the COCP is reformulated using the setup described in Steps~\ref{stepjump}-\ref{stepjump2} and solved with the updated formulation. To avoid the additional computational effort associated with resolving the modified problem at time $t$, the control input $u_t$ can instead be computed using the previously obtained optimal solution at time $t-1$, together with the measured state $x_t$ and \eqref{convU}.
\end{rem}
% \begin{rem}
% A complete theoretical characterization of the existence of a pair $(P_t,K_t)$ satisfying \eqref{darecri1} and \eqref{dare2cri}, as well as precise conditions under which such a pair is guaranteed to exist, is not studied in this work. Instead, we adopt a heuristic procedure for computing $(P_t,K_t)$. As long as the uncertainty set $\Psi_t$ does not undergo a significant reduction, the previously computed pair $(P_{t-1},K_{t-1})$ is retained. When $\Psi_t$ shrinks sufficiently due to informative data, we search for a new feedback gain and terminal cost matrix, where \eqref{darecri1} is solved temporarily with inflated weights $\bar Q \succ Q$ and $\bar R \succ R$. The inflation increases the margin in the Lyapunov inequality and facilitates satisfaction of \eqref{darecri1}. Once a feasible pair $(P_t,K_t)$ is obtained, it is verified that \eqref{dare2cri} also holds with the original weights $Q$ and $R$. The inflated weights are used solely as a computational aid and do not modify the final design parameters.
% \end{rem}

\subsection{Recursive Feasibility and Stability Analysis} 
Next, we prove that the proposed COCP $\mathbb P2_t$ following Algorithm \ref{algoatube} is recursively feasible, and leads to robust exponential stability of the origin of \eqref{osys1}.
\begin{thm}\label{theo1}
    Suppose Assumption \ref{Asdare1} holds, and the COCP $\;\mathbb{P}2_t$ is feasible at some time $t$. Then, Algorithm \ref{algoatube} ensures that $\;\mathbb P2_t$ is feasible at time $t+k$ $\forall k\in\mathbb I_1^\infty$.
\end{thm}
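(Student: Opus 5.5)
By induction on $k$, it suffices to show that feasibility of $\mathbb P2_t$ implies feasibility of the problem actually solved at $t+1$ by Algorithm \ref{algoatube}. The plan is to show that the \emph{backup} problem at $t+1$ is always feasible, by building a shifted candidate from the optimal solution $\bar\mu_t^*$. If the freshly adapted $\mathbb P2_{t+1}$ is feasible, there is nothing to prove. Otherwise (Criterion \ref{Crdare2} fails or the COCP is infeasible), the algorithm resets $\hat\psi_{t+1}=\hat\psi_t$, $P_{t+1}=P_t$, $K_{t+1}=K_t$ and $\Psi_{t+1}$ via \eqref{backpsi}, and we must show this backup problem is feasible.

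\textbf{Step 1: the sets only shrink.} With $\hat\psi_{t+1}=\hat\psi_t$, we have $\Psi_{t+1}=\textbf{co}(\Psi_t\cap\Xi_{t+1}\cup\{\hat\psi_t\})\subseteq\Psi_t$, since $\hat\psi_t\in\Psi_t$ and $\Psi_t$ is convex. Hence $\Phi_{t+1}\subseteq\Phi_t$, $\Phi_{A_{t+1}}\subseteq\Phi_{A_t}$, $\Phi_{B_{t+1}}\subseteq\Phi_{B_t}$, and so $\mathbb W_{t+1}\subseteq\mathbb W_t$. Similarly $\Psi_{A_{t+1}}\subseteq\Psi_{A_t}$ and $\Psi_{B_{t+1}}\subseteq\Psi_{B_t}$. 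The true $\psi$ generated $x_{t+1}$ from $x_t$ with $u_t\in\mathbb U$ and the true parameter lies in $\Psi_t$ by Lemma \ref{lemma1}, and $x_{t+1}\in\mathbb X$ (to be justified in Step 2). An induction on \eqref{Xreachset_all} then gives $\mathbb X_{i,t+1}\subseteq\mathbb X_{i+1,t}$, and therefore $\mathbb W_{i,t+1}\subseteq\mathbb W_{i+1,t}$ for $i\in\mathbb I_0^{N-2}$. The tube shape $\mathbb S_{t+1}$, computed from the smaller $\mathbb W_{t+1}$ with the same $K_t$, can be taken equal to $\mathbb S_t$, since $\mathbb S_t$ remains RPI. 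The terminal set $\mathbb X_{TS_t}$ still satisfies Criterion \ref{Crdare2}(c) with $\mathbb W_{t+1}$.

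\textbf{Step 2: the candidate.} Since $x_t\in\mathbb T^{x*}_{0|t}$ and $u_t=\mathbf u(x_t,\cdot)$, linearity gives $\hat A_tx_t+\hat B_tu_t\in\mathbb T^{x*}_{1|t}\ominus\mathbb W_{0,t}$ by \eqref{cons5m}. Because $w_t\in\mathbb W_{0,t}$, this yields $x_{t+1}\in\mathbb T^{x*}_{1|t}\subseteq\mathbb X$. The candidate then proceeds as follows.
\begin{itemize}
\item Set $\mathbb T^x_{0|t+1}=\{x_{t+1}\}$, i.e.\ $\alpha=x_{t+1}$, $\beta=0$, with all vertices equal to $x_{t+1}$ and inputs $\mathbf u(x_{t+1},\mathbb T^{x*}_{1|t},\mathbb T^{u*}_{1|t})\in\mathbb U$.
\item For $i\in\mathbb I_1^{N-1}$, take $(\alpha_{i|t+1},\beta_{i|t+1},v^{[j]}_{i|t+1})=(\alpha^*_{i+1|t},\beta^*_{i+1|t},v^{[j]*}_{i+1|t})$. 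Inclusion \eqref{cons5m} is inherited from time $t$, because $\mathbb W_{i,t+1}\subseteq\mathbb W_{i+1,t}$ makes the Pontryagin difference larger.
\item For the step from $N-1$ to $N$, use $v^{[j]}_{N-1|t+1}=K_tz^{[j]*}_{N|t}$ and set $\mathbb T^x_{N|t+1}=(\hat A_t+\hat B_tK_t)\mathbb T^{x*}_{N|t}\oplus\mathbb W_t$. This must be written in homothetic form $\alpha\oplus\beta\mathbb S_t$. Here the RPI property of $\mathbb S_t$ is used: with $\alpha=(\hat A_t+\hat B_tK_t)\alpha^*_{N|t}$ and $\beta=\beta^*_{N|t}+1$, we get $(\hat A_t+\hat B_tK_t)\beta^*\mathbb S_t\oplus\mathbb W_t\subseteq\beta^*\mathbb S_t\oplus\mathbb S_t$.
\end{itemize}

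\textbf{Step 3: the main obstacle.} The terminal constraints are the delicate part. The containments $\mathbb T^x_{N|t+1}\subseteq\mathbb X_{TS_t}$ and $K_t\mathbb T^{x*}_{N|t}\subseteq\mathbb U$ follow from \eqref{2ndterminala}--\eqref{2ndterminalb}, since $\mathbb T^{x*}_{N|t}\subseteq\mathbb X_{TS_t}$. The difficulty is that the inflated homothetic section $\alpha\oplus(\beta^*+1)\mathbb S_t$ is generally larger than the exact image and need not lie in $\mathbb X_{TS_t}$. I would first try to avoid this by taking the section $\mathbb T^x_{N|t+1}$ as the exact image $(\hat A_t+\hat B_tK_t)\mathbb T^{x*}_{N|t}\oplus\mathbb W_{N-1,t+1}$, bounding it inside $\alpha\oplus\beta\mathbb S_t$ only where required, and relying on the terminal set being RPI. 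Failing that, I would argue along the lines of homothetic tube MPC \cite{rakovic2012homothetic} that $\mathbb X_{TS_t}$ can be chosen compatible with $\mathbb S_t$, so that the homothetic over-approximation stays inside it. A second check is $\mathbb T^x_{i|t+1}\subseteq\mathbb X_{i,t+1}$ from \eqref{Tube1}, which needs the reachable-set inclusion to hold with the actual $x_{t+1}$ rather than $\mathbb X_{i+1,t}$. I would handle this by showing each $\mathbb T^{x*}_{i+1|t}$ is contained in the reachable set from $x_{t+1}$ under the nominal-plus-$\mathbb W$ dynamics, which holds because tube membership is propagated by \eqref{cons5m}.

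Once the candidate is feasible, the backup problem at $t+1$ is feasible, so Algorithm \ref{algoatube} always produces a feasible $\mathbb P2_{t+1}$, and induction on $k$ completes the proof.
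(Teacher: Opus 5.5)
\textbf{Same route as the paper, but the proof is not complete.} Your strategy matches the paper's. You reduce to feasibility of the backup problem at $t+1$. You use that \eqref{backpsi} with frozen $\hat\psi_t$ only shrinks $\Psi_{t+1}$ and $\mathbb W_{t+1}$ and gives $\mathbb W_{i,t+1}\subseteq\mathbb W_{i+1,t}$. You then shift the optimal tube as in \eqref{nonada}--\eqref{actualsolution}. Step~1 and the shifted sections of Step~2 are fine. Keeping $\mathbb S_{t+1}=\mathbb S_t$ is admissible, since $\mathbb S_t$ is still RPI for $\mathbb W_{t+1}$, but you should say explicitly that the backup keeps it, because Algorithm~\ref{algoatube} recomputes it. It is also cleaner than recomputing a smaller $\mathbb S_{t+1}$ with unchanged $(\alpha,\beta)$, since that also shrinks the target sections in \eqref{cons5m}.

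\textbf{The terminal step is left open.} Your first fallback, using the exact image as the section, is not admissible. \eqref{Tube1} forces $\mathbb T^x_{N|t+1}=\alpha_{N|t+1}\oplus\beta_{N|t+1}\mathbb S_{t+1}$ exactly, so you must exhibit a homothetic copy of $\mathbb S_t$ lying between $(\hat A_t+\hat B_tK_t)\mathbb T^{x*}_{N|t}\oplus\mathbb W_{N-1,t+1}$ and $\mathbb X_{TS_t}$. Robust invariance of $\mathbb X_{TS_t}$ controls only the exact image. The homothetic hull centred at $(\hat A_t+\hat B_tK_t)\alpha^*_{N|t}$ can protrude from $\mathbb X_{TS_t}$ even with the minimal scaling in \eqref{nonada}, e.g.\ when $\mathbb X_{TS_t}$ is a maximal RPI set whose faces are not aligned with $\mathbb S_t$. (Your $\beta^*_{N|t}+1$ is looser than needed; $\max\{\beta^*_{N|t},1\}$ already contains the image.) Your second fallback, choosing $\mathbb X_{TS_t}$ compatible with $\mathbb S_t$, is an extra hypothesis, not a consequence of Assumption~\ref{Asdare1} or Criterion~\ref{Crdare2}. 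The paper gets past this point only by asserting $\underline{\mathbb T^x_{N|t+1}}\subseteq\mathbb X_{TS_t}$ and pointing to non-adaptive homothetic tube MPC. There the terminal condition is typically imposed on the pair $(\alpha_N,\beta_N)$, not through a state-space RPI set.

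One way to make your second fallback precise is to take $\mathbb X_{TS_t}=\gamma\mathbb S_t$. If $\epsilon\mathbb S_t\subseteq\mathbb W_t\subseteq\mu\mathbb S_t$, the RPI property of $\mathbb S_t$ gives $(\hat A_t+\hat B_tK_t)\mathbb S_t\subseteq(1-\epsilon)\mathbb S_t$. Then $(\hat A_t+\hat B_tK_t)\alpha^*_{N|t}\oplus((1-\epsilon)\beta^*_{N|t}+\mu)\mathbb S_t$ contains the image and lies in $\gamma\mathbb S_t$ whenever $\epsilon\gamma\geq\mu$. The price is the extra requirements $\gamma\mathbb S_t\subseteq\mathbb X$ and $K_t\gamma\mathbb S_t\subseteq\mathbb U$.

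\textbf{Your second check runs the wrong way.} \eqref{Tube1} requires $\mathbb T^x_{i|t+1}\subseteq\mathbb X_{i,t+1}$. This is an \emph{upper} bound on each section by the set reachable from the single point $x_{t+1}$. That set is generated by $\Psi_{A_{t+1}},\Psi_{B_{t+1}}$, $\mathbb U$ and $\mathbb D$, not by the nominal-plus-$\mathbb W$ model. By contrast, \eqref{cons5m} gives only a \emph{lower} bound: the one-step image plus $\mathbb W_{i,t}$ lies inside the next section. Your Step~1 inclusion $\mathbb X_{i,t+1}\subseteq\mathbb X_{i+1,t}$ points the unhelpful way, so $\mathbb T^{x*}_{i+1|t}\subseteq\mathbb X_{i+1,t}$ does not transfer. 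The new terminal section ($i=N$) must also be checked against $\mathbb X_{N,t+1}$. The paper's proof likewise verifies inclusion only in $\mathbb X_{i+1,t}$. This step therefore needs a genuinely new argument, or a relaxation of the inclusion in \eqref{Tube1}, before recursive feasibility is established.
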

{\textit{Proof:}} The proof proceeds by induction. Assume that the COCP is feasible at time $t$. We show feasibility at time $t+1$.

Given a feasible solution at $t$, the plant \eqref{osys1} evolves and the parameter estimates, uncertainty set and disturbance sets are updated. Criterion~\ref{Crdare2} is then checked at $t+1$. If it is satisfied, new matrices $P_t$, $K_t$, terminal set and tube geometry $\mathbb S_t$ are computed, resulting in a modified COCP. Since the problem data change, feasibility of this newly formed problem cannot be guaranteed a priori. If $\mathbb P2_{t+1}$ is feasible, the algorithm proceeds with its solution. Otherwise, or if Criterion~\ref{Crdare2} does not hold, the backup setup in Step~\ref{stepbackup} is used. For the backup, two situations arise.

\begin{itemize}

\item[(i)] \emph{No update of point estimate or uncertainty set.} In this case, the COCP coincides with the standard homothetic tube-based MPC formulation. Recursive feasibility follows from the existing literature (see \cite{langson2004robust,rakovic2012homothetic,lorenzen2019robust,anchita2025}). Let the feasible solution at $t+1$ for the non-adaptive scenario be denoted as follows using $\underline{(\cdot)}$, with $(\cdot)^*$ denoting the optimal solution of $\mathbb P_t$.

\begin{subequations}\label{nonada}
    \begin{align}
    & \underline{\mathbb T^x_{0|t+1}}=\{x_{t+1}\},\\
      &  \underline{\mathbb T^x_{i|t+1}}=\mathbb T^{x^*}_{i+1|t}\;\;\forall i\in\mathbb I_1^{N-1},\\
       & \underline{\mathbb T^u_{i|t+1}}=\mathbb T^{u^*}_{i+1|t}\;\;\forall i\in\mathbb I_0^{N-2},\\
      &  \underline{\mathbb T^x_{N|t+1}}=\underline{\alpha_{N|t+1}}\oplus \underline{\beta_{N|t+1}}\mathbb S_t,\\
      &   \underline{\mathbb T^u_{N-1|t+1}}=K_t \underline{\mathbb T^x_{N-1|t+1}}=K_t \mathbb T^{x^*}_{N|t},\\
      &\text{where }\underline{\alpha_{N|t+1}}=(\hat A_t+\hat B_t K_t)\alpha_{N|t}^* \text{ and }\\
      & \underline{\beta_{N|t+1}}=\min_{\beta} \left\{ \beta \;\middle|\; (\hat A_t+\hat B_t K_t)\beta_{N|t}^* \mathbb S_t\oplus \mathbb W_t \subseteq \beta\mathbb S_t   \right\},
             \end{align} \end{subequations}
with $ \underline{z_{i|t+1}^{[j]}}$ and $\underline{v_{i|t+1}^{[j]}}$ being the vertices of $\underline{\mathbb T^x_{i|t+1}}$ $\forall i\in\mathbb I_0^{N}$ and the elements of $\underline{\mathbb T^u_{i|t+1}}$ $\forall i\in\mathbb I_0^{N-1}$, respectively, $\forall j\in\mathbb I_1^{M_t}$. The solution is useful in proving recursive feasibility of the next case.

\item[(ii)] \emph{Uncertainty set update only.} 
Here, the point estimates $\hat A_t,\;\hat B_t$ and $(P_t,K_t)$ remain unchanged, while the uncertainty set is refined according to \eqref{backpsi}, yielding $\Psi_{t+1}\subseteq \underline{\Psi_{t+1}}=\Psi_t$. Consequently, $\Phi_{t+1}\subseteq \underline{\Phi_{t+1}}=\Phi_t,\;\Phi_{A_{t+1}}\subseteq\underline{\Phi_{A_{t+1}}}=\Phi_{A_t}, \;\Phi_{B_{t+1}}\subseteq \underline{\Phi_{B_{t+1}}}= \Phi_{B_t},\;\Psi_{A_{t+1}}\subseteq \underline{\Psi_{A_{t+1}}}=\Psi_{A_t},\; \Psi_{B_{t+1}}\subseteq \underline{\Psi_{B_{t+1}}}= \Psi_{B_t},$ resulting in
\begin{subequations}\label{WSTimprove}
\begin{align}
&\mathbb W_{t+1}\subseteq \underline{\mathbb W_{t+1}}=\mathbb W_t, \;\mathbb S_{t+1}\subseteq \underline{\mathbb S_{t+1}} =\mathbb S_t,\\
&\mathbb X_{i,t+1}\subseteq \underline{\mathbb X_{i,t+1}}\subseteq \mathbb X_{i+1,t}\;\;\forall  i\in\mathbb I_0^{N-1},\\
&\mathbb W_{i,t+1}\subseteq  \underline{\mathbb W_{i,t+1}}\subseteq\mathbb W_{i+1,t}\;\;\forall  i\in\mathbb I_0^{N-1},\\
& \mathbb X \supseteq \mathbb X_{TS_{t+1}}\supseteq\underline{\mathbb X_{TS_{t+1}}}=\mathbb X_{TS_t}.
\end{align}\end{subequations}
Leveraging \eqref{WSTimprove} and the solution \eqref{nonada} for the non-adaptive case, we prove that the following is a feasible solution for the COCP at $t+1$  under the refined sets.
\begin{subequations}\label{actualsolution}
    \begin{align}
&\mathbb T^x_{i|t+1}=\alpha_{i|t+1}\oplus\beta_{i|t+1}\mathbb S_{t+1} \;\;\;\forall i\in\mathbb I_0^N, \text{ where }\\
&\alpha_{0|t+1}=x_{t+1}, \beta_{0|t+1}=0 ,\\
&\alpha_{i|t+1}=\alpha_{i+1|t}^*,\;\beta_{i|t+1}=\beta_{i+1|t}^*\;\;\forall i\in\mathbb I_1^{N-1} ,\label{fs2}\\
&\alpha_{N|t+1}=\underline{\alpha_{N|t+1}}, \;\beta_{N|t+1}=\underline{\beta_{N|t+1}}  \label{fs4},\text{ and}\\
&\mathbb T^u_{i|t+1}= \left\{\mathbf{u}(z_{i|t+1}^{[j]},\underline{\mathbb T^{x}_{i|t+1}}, \underline{\mathbb T^u_{i|t+1}})\middle| j\in\mathbb I_1^{M_{t+1}}\right\}.
\end{align}
\end{subequations}
It is easily seen that the centers and scaling factors in \eqref{actualsolution} satisfy \eqref{CalTube}-\eqref{abarTube}, \eqref{cons1m} and \eqref{cons6m} at $t+1$. For the tube-sections, we can write
  \begin{align*}
 \mathbb T_{0|t+1}^x&=\underline{ \mathbb T_{0|t+1}^x}\subseteq  \mathbb X_{1,t},\end{align*}\begin{align*}
 \mathbb T_{i|t+1}^x&=\alpha_{i|t+1}\oplus\beta_{i|t+1}\mathbb S_{t+1} \nonumber\\
&\subseteq \alpha_{i+1|t}^*\oplus\beta_{i+1|t}^*\mathbb S_{t} \nonumber\\
&=\underline{\mathbb T^x_{i|t+1}}\subseteq \mathbb X_{i+1,t} \;\;\;\forall i\in\mathbb I_1^{N-1}, \\
\mathbb T^x_{N|t+1}&=\alpha_{N|t+1}\oplus\beta_{N|t+1}\mathbb S_{t+1} \nonumber\\
&\subseteq  \underline{\alpha_{N|t+1}}\oplus\underline{\beta_{N|t+1}}\mathbb S_{t} \nonumber\\
&=\underline{\mathbb T^x_{N|t+1}}\subseteq \mathbb X_{TS_t}\subseteq\mathbb X_{TS_{t+1}}\subseteq\mathbb X, \\
% \mathbb T^x_{N|t+1}&\subseteq \underline{\mathbb T^x_{N|t+1}}\subseteq(\hat A_t + \hat B_t K_t)\mathbb T^{x^*}_{N|t}\oplus \mathbb W_t \nonumber\\
% &= (\hat A_{t+1} + \hat B_{t+1} K_{t+1})\mathbb T^{x^*}_{N|t}\oplus \mathbb W_{t} \nonumber\\
% &\subseteq (\hat A_{t+1} + \hat B_{t+1} K_{t+1})\mathbb T^{x^*}_{N|t}\oplus \mathbb W_{t+1}.
\textbf{co}(\mathbb T^u_{i|t+1}) &\subseteq \textbf{co}(\underline{\mathbb T^u_{i|t+1}})\Rightarrow  \mathbb T^u_{i|t+1}\subseteq \mathbb U \;\;\forall i\in\mathbb I_0^{N-1},
\end{align*} 
 which together prove the satisfaction of \eqref{cons2m} and \eqref{cons3m} at time $t+1$.
Finally, for \eqref{actualsolution} to satisfy \eqref{cons5m}, note that from the non-adaptive case we have
\begin{align*}
 \hat{A}_{t} \underline{z_{i|t+1}^{[j]}}+\hat{B}_{t} \underline{v_{i|t+1}^{[j]}}
 \in \underline{\mathbb T^x_{i+1|t+1}} \ominus \underline{\mathbb W_{i,t+1}},
\end{align*}
 $\mathbb S_{t+1}\subseteq \underline {\mathbb S_{t+1}}$, $\mathbb T^x_{i|t+1}\subseteq \underline{\mathbb T^x_{i|t+1}}$, 
$\textbf{co}(\mathbb T^u_{i|t+1})\subseteq \textbf{co}(\underline{\mathbb T^u_{i|t+1}})$,
and $\mathbb W_{i,t+1}\subseteq \underline{\mathbb W_{i,t+1}}$.
Since the prediction dynamics is linear and the input applied to any
$z\in\mathbb T^x_{i|t+1}$ is obtained through the convex-combination
mapping \eqref{convU}, it follows that
\begin{align*}
    \hat{A}_{t+1} z_{i|t+1}^{[j]}+\hat{B}_{t+1} v_{i|t+1}^{[j]}
\in \mathbb T^x_{i+1|t+1}\ominus \mathbb W_{i,t+1}.
\end{align*}
Hence \eqref{cons5m} holds at time $t+1$, and the proposed solution in \eqref{actualsolution} satisfies all the constraints in $\mathbb P2_{t+1}$.
\end{itemize}

Therefore, whenever the backup setup is invoked, a feasible solution to the COCP at time $t+1$ exists. If Criterion~\ref{Crdare2} is satisfied, the algorithm instead attempts to solve the updated problem $\mathbb P2_{t+1}$ corresponding to the modified problem data. If this problem is feasible, the resulting solution is applied; otherwise, the backup setup ensures feasibility. Repeating the same argument at subsequent time steps establishes recursive feasibility of the proposed MPC COCP. \hfill$\blacksquare$
\begin{cor}\label{corolaa}
    Suppose Assumption \ref{Asdare1} holds, and the COCP $\;\mathbb{P}2_t$ is feasible at time $t$. Then, by Lemmas \ref{lemma1}, \ref{lemma2} and Theorem \ref{theo1}, it is guaranteed that the signals $x_t, u_t, \hat \psi_t\in\mathcal L_\infty$.
\end{cor}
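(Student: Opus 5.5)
The argument splits into three parts, one for each signal: $x_t$ via the state constraint, $u_t$ via the input constraint, and $\hat\psi_t$ via the adaptive law.

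\emph{State bound.} By Theorem~\ref{theo1}, once $\mathbb P2_t$ is feasible at time $t$, Algorithm~\ref{algoatube} yields a feasible COCP at every later time $t+k$, $k\in\mathbb I_1^\infty$, using either the adaptive formulation or the backup. Fix any such time $\tau\geq t$. Constraint \eqref{cons6m} gives $\mathbb T^{x^*}_{0|\tau}=\{x_\tau\}$, and \eqref{cons2m} then gives $x_\tau\in\mathbb X$. The set $\mathbb X$ is a known convex polytope, hence compact, so $\|x_\tau\|_2\le \sup_{x\in\mathbb X}\|x\|_2<\infty$ for all $\tau\ge t$. The finitely many states before $t$ are finite, so $x_t\in\mathcal L_\infty$. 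I will note in the proof that compactness of $\mathbb X$ and $\mathbb U$ is implicit in the problem statement; this is needed for $\mathbb W_t$ and the RPI sets to be bounded in any case.

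\emph{Input bound.} The applied input is $u_\tau^*=\mathbf u(x_\tau,\mathbb T^{x^*}_{0|\tau},\mathbb T^{u^*}_{0|\tau})$. By \eqref{convU} this is a convex combination of the vertices $v^{[j]}_{0|\tau}\in\mathbb T^{u^*}_{0|\tau}\subseteq\mathbb U$, the inclusion holding by \eqref{cons2m}. Since $\mathbb U$ is convex, $u_\tau\in\mathbb U$, and compactness of $\mathbb U$ gives $u_t\in\mathcal L_\infty$.

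\emph{Estimate bound.} The bounded $x_t,u_t$ make $g_{t-1}$ bounded, which also confirms that $\bar d$ in Lemma~\ref{lemma2} is finite, since $\mathbb D$ is compact. The first item of Lemma~\ref{lemma2} then gives $\hat\psi_t\in\mathcal L_\infty$. More directly, the projection in \eqref{psiadapt} places $\hat\psi_t\in\Psi_t\subseteq\Psi_0$ by Lemma~\ref{lemma1}, and $\Psi_0$ is a polytope with finitely many vertices, hence bounded.

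\emph{Main obstacle.} The delicate point is the backup step, because \eqref{backpsi} can enlarge $\Psi_t$ beyond the intersection, and the estimate is reset to $\hat\psi_{t-1}$. I would handle this by induction: $\hat\psi_{t-1}\in\Psi_{t-1}\subseteq\Psi_0$, so the convex hull in \eqref{backpsi} stays inside the convex set $\Psi_0$. Every estimate used therefore remains in the bounded set $\Psi_0$, and the bound on $\hat\psi_t$ holds regardless of how often the backup is invoked.
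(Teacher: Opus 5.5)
Your proposal is correct and follows the same route the paper implicitly takes. Theorem~\ref{theo1} gives feasibility at every later time, so \eqref{cons6m} and \eqref{cons2m} place $x_\tau\in\mathbb X$ and $u_\tau\in\mathbb U$ (both compact), while Lemma~\ref{lemma2} together with Lemma~\ref{lemma1} bounds $\hat\psi_t$. Your direct argument $\hat\psi_t\in\Psi_t\subseteq\Psi_0$ is a worthwhile addition, including the induction showing that \eqref{backpsi} keeps $\Psi_t$ inside $\Psi_0$, because Lemma~\ref{lemma2} is stated only for the update law \eqref{psiadapt} and does not cover the reset of the estimate in the backup step.
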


% \begin{cor}
%     Suppose Assumption \ref{Asdare1} holds, $x_{max}=\max_{x\in\mathbb X}||x||_Q$ and $u_{max}=\max_{u\in\mathbb U}||u||_R$. Then, by Lemmas \ref{lemma1}, \ref{lemma2} and Theorem \ref{theo1}, the term 
%     \begin{align}\label{gamma1}
%         \gamma_1\triangleq \max_{t\in\mathbb I_0^\infty, x\in\mathbb X, u\in\mathbb U} N(||x||_Q^2+||u||_R^2)+||x||^2_{P_t}
%         \end{align} is finite.
% \end{cor}
\begin{thm}\label{theo2}
    Suppose Assumption \ref{Asdare1} holds, and the COCP $\;\mathbb{P}2_t$ is feasible at time $t$. Then, Algorithm \ref{algoatube} guarantees robust exponential stability of the origin of the plant \eqref{osys1}.
\end{thm}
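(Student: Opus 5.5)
The plan is a Lyapunov argument on the optimal value of $\mathbb P2_t$. Let $V_t(x_t)$ denote the optimal cost \eqref{MPC2cost} at time $t$. I aim for three properties:
\begin{align*}
c_1\|x_t\|^2 &\le V_t(x_t)\le c_2\|x_t\|^2,\\
V_{t+1}(x_{t+1}) &\le V_t(x_t)-c_3\|x_t\|^2,
\end{align*}
with constants $c_1,c_2,c_3>0$ independent of $t$ and of the disturbance realisation. Together these give $V_{t+k}\le(1-c_3/c_2)^kV_t$, and hence $\|x_{t+k}\|\le\sqrt{c_2/c_1}\,(1-c_3/c_2)^{k/2}\|x_t\|$. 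This is the exponential estimate required by \cite[Def.~2]{mayne2006robust}. Theorem~\ref{theo1} supplies feasibility at every $t$, so $V_t$ is well defined along the closed loop. Corollary~\ref{corolaa} and Lemma~\ref{lemma2} supply uniform bounds on $\hat\psi_t$, $x_t$ and $u_t$, which I will use to make all constants time-independent.

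\emph{Lower bound.} Constraint \eqref{cons6m} forces $z^{[j]}_{0|t}=x_t$ for every $j$. Hence $V_t\ge M_t\|x_t\|_Q^2\ge\lambda_{\min}(Q)\|x_t\|^2$, which gives $c_1=\lambda_{\min}(Q)$ since $M_t\ge 1$.

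\emph{Upper bound.} I will use the terminal set. For $x_t\in\mathbb X_{TS_t}$ I take the candidate tube generated by $K_t$ with centres $\alpha_{i|t}=(\hat A_t+\hat B_tK_t)^ix_t$. Condition (a) of Criterion~\ref{Crdare2} telescopes to a bound of the form $c\|x_t\|^2_{P_t}$. I then extend this to all of $\mathbb X$ by compactness. To make the resulting $c_2$ uniform in $t$, I need $P_t$ bounded above; I will obtain this from condition (b), which gives $P_{t-1}\succeq Q+(\cdot)^\top P_t(\cdot)$, together with boundedness of the estimates.

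\emph{Decrease.} I split into the same cases as the proof of Theorem~\ref{theo1} and use the candidate \eqref{nonada}/\eqref{actualsolution}. Its first $N-1$ sections are shifts of the optimal ones, so the running cost loses exactly the $i=0$ stage $\sum_j(\|x_t\|_Q^2+\|v^{[j]}_{0|t}\|_R^2)\ge\lambda_{\min}(Q)\|x_t\|^2$. In the backup cases (i) and (ii), the convexity argument of Theorem~\ref{theo1} also shows that the shrunken sections with $\mathbb S_{t+1}\subseteq\mathbb S_t$ have vertex costs no larger than those of the non-adaptive candidate. In the adaptive case the candidate is not directly available, and the algorithm accepts the new problem only when it is feasible. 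There I will compare $V_{t+1}$ with the cost of the backup candidate, arguing that optimality over the larger feasible set cannot exceed it. Where that comparison is unavailable, I will use condition (b), which is designed exactly to bound the new terminal cost $\|(\hat A_t+\hat B_tK_t)z\|_{P_t}^2$ by $\|z\|^2_{P_{t-1}}-\|z\|_Q^2-\|K_{t-1}z\|_R^2$.

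\emph{Main obstacle.} The difficulty is the terminal step. The candidate terminal section is $\underline{\alpha_{N|t+1}}\oplus\underline{\beta_{N|t+1}}\mathbb S_t$, and $\underline{\beta_{N|t+1}}$ is inflated by $\mathbb W_t$. Its vertices are therefore not simply $(\hat A_t+\hat B_tK_t)z^{[j]}_{N|t}$, and a naive Lyapunov inequality on them leaves an additive residual. My first attempt will follow the homothetic-tube argument of \cite{rakovic2012homothetic,anchita2025}. I would bound the terminal vertex cost through the centre/scaling decomposition and use that $\mathbb W_{i,t}$ is built from the reachable sets $\mathbb X_{i,t}$ starting at $x_t$. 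The intent is to show that the disturbance contribution scales with the state-dependent part $\Phi_{A_t}\mathbb X_{i,t}$, so that it is absorbed into $-c_3\|x_t\|^2$ after using the margin $Q\succ0$ in (a)--(b). If that absorption fails for the $\mathbb D$ component, the decrease holds only up to a bounded residual. In that case I would exploit the shrinking of $\Psi_t$ and the $\mathcal S(\bar d^2)$ property of Lemma~\ref{lemma2} to argue that the residual does not accumulate, and I would state explicitly the exact form of robust exponential stability, in the sense of \cite[Def.~2]{mayne2006robust}, that the residual forces.
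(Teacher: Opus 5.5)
\textbf{Genuine gap: the residual-free target is false.} The three properties you aim for cannot hold for this problem, so your main line of proof fails, not just a technical step. Because $\mathbb D$ contains the origin in its interior and $d_t$ is unknown, the origin is not an equilibrium of the closed loop. At $x_t=0$ the applied input is fixed before $d_t$ is realised, so $x_{t+1}=Bu_t+d_t\neq 0$ for some admissible $d_t$. Hence no estimate $\|x_{t+k}\|\le C\lambda^k\|x_t\|$ can hold.

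Both of your ingredients fail for the same reason. Since $\hat\psi_t\in\Psi_t$, we have $0\in\Phi_{A_t}$ and $0\in\Phi_{B_t}$, so $\mathbb W_{0,t}\supseteq\mathbb D$. Constraint \eqref{cons5m} at $i=0$ then forces $\mathbb T^x_{1|t}$ to contain a translate of $\mathbb D$. So some vertex $z^{[j]}_{1|t}$ has norm at least the radius $r_{\mathbb D}>0$ of a ball about the origin contained in $\mathbb D$. This gives $V_t(x_t)\ge\lambda_{\min}(Q)r_{\mathbb D}^2$ even at $x_t=0$. Your terminal-set candidate, whose $\beta_{i|t}$ must be positive, therefore yields only $V_t\le c_2\|x_t\|^2+c_0$ with $c_0>0$.

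The absorption you attempt in your last paragraph must fail for the $\mathbb D$ component. Neither the contraction of $\Psi_t$ nor Lemma~\ref{lemma2} removes it, because $\mathcal S(\bar d^2)$ is a mean-square bound proportional to $\bar d^2>0$, not a vanishing one.

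\textbf{What the paper does instead.} The paper builds the residual in from the start. It bounds $J^*_{t+1}$ by the cost of the shifted candidate and writes the new terminal vertices as $(\hat A_{t+1}+\hat B_{t+1}K_{t+1})z^{[j]^*}_{N|t}+w_{t+N}$. It then cancels the quadratic terminal terms with condition (b) of Criterion~\ref{Crdare2}. This gives $J^*_{t+1}\le(1-\gamma_2)J^*_t+\gamma_3$, where $\gamma_3$ collects $M_{\max}\max_w\|w\|^2_{P}$ and a vertex-count term. Robust exponential stability is read as exponential convergence to a bounded set, which is how \cite{mayne2006robust} defines it (for a set, not the origin).

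If you commit to that form, your $c_1$, the corrected pair $(c_2,c_0)$ and a decrease $V_{t+1}\le V_t-c_3\|x_t\|^2+r$ combine into $V_{t+1}\le(1-c_3/c_2)V_t+c_3c_0/c_2+r$. This is actually tidier than the paper's recursion, whose $\gamma_2$ is a state- and time-dependent ratio for which no uniform lower bound is given.

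\textbf{Further steps that need repair.}
\begin{enumerate}
\item In the adaptive case, $\mathbb P2_{t+1}$ built from the new $\hat\psi_{t+1},P_{t+1},K_{t+1},\mathbb S_{t+1},\mathbb X_{TS_{t+1}}$ is not a relaxation of the backup problem. It has a different prediction model, $\Phi_{t+1}$ centred at a different estimate, and a different terminal weight and vertex set. So ``optimality over the larger feasible set'' is unavailable. You must instead evaluate a shifted candidate whose terminal step uses $\hat A_{t+1}+\hat B_{t+1}K_{t+1}$ and invoke condition (b), which is your fallback and the paper's route. Note that Theorem~\ref{theo1} verifies such a candidate only in the backup cases, so you would have to supply its feasibility under the updated data yourself.
\item You do not treat $M_{t+1}\neq M_t$. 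Convexity bounds each new vertex cost only by the largest old vertex cost, so the sum over the $M_{t+1}$ new vertices is not dominated by the old sum. The paper pays for this with the bounded term $\rho_{2_t}(M_{t+1}-M_t)\gamma_1$ and a cap $M_{\max}$, i.e.\ it goes into the residual.
\item Condition (b) bounds $(\hat A_t+\hat B_tK_t)^\top P_t(\hat A_t+\hat B_tK_t)$, not $P_t$ itself. It therefore does not give the uniform upper bound on $P_t$ that your $c_2$ and residual constants need. That bound must be imposed or proved separately; the paper simply asserts it.
\end{enumerate}
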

{\textit{Proof: }}The proof follows the standard MPC stability argument \cite{kouvaritakis2016model} by using the cost function of the COCP as a candidate Lyapunov function. Let the COCP cost function be denoted as
\begin{align*}
    &J_t(x_t,\bar\mu_t)\triangleq\sum_{j=1}^{M_t} \Gamma_t^{[j]},\;\text{where }\\
   & \Gamma_t^{[j]}\triangleq \sum_{i=0}^{N-1} \left(|| z_{i|t}^{[j]}||^2_Q + || v_{i|t}^{[j]}||^2_R \right)+|| z_{N|t}^{[j]}||^2_{P_t}
\end{align*} is the component of the cost function for each vertex of the tube-sections. To address the change in the number of vertices ($M_t$ to $M_{t+1}$) due to a change in the tube geometry, define the variables \begin{align*}
\rho_{1_t}\triangleq\begin{cases}
        0,\;\text{if } M_{t}\leq M_{t+1}\\
         1, \;\text{otherwise,}
    \end{cases}\rho_{2_t}\triangleq\begin{cases}
        0,\;\text{if } M_{t}\geq M_{t+1}\\
         1, \;\text{otherwise}.
    \end{cases}
\end{align*} 
Also, let 
\begin{align}
    \gamma_1\triangleq \max_{t\in\mathbb I_0^\infty, x\in\mathbb X, u\in\mathbb U} N(||x||_Q^2+||u||_R^2)+||x||^2_{P_{t+1}},\label{gamma_1_bound}
\end{align} which by Lemmas \ref{lemma1}, \ref{lemma2} and Theorem \ref{theo1}  is finite.

At any time ${t+1}$, where $t\in\mathbb I_0^\infty$, we can write
\begin{align}
&J_{t+1}^*(x_{t+1})\leq J_{t+1}(x_{t+1},\bar\mu_{t+1})=\sum_{j=1}^{M_{t+1}} \Gamma_{t+1}^{[j]}\nonumber\\
&=\sum_{j=1}^{M_{t}} \Gamma_{t+1}^{[j]}-\rho_{1_t}\sum_{j=M_{t+1}+1}^{M_{t}} \Gamma_{t+1}^{[j]}+\rho_{2_t}\sum_{j=M_t+1}^{M_{t+1}} \Gamma_{t+1}^{[j]}\nonumber\\
&\leq \sum_{j=1}^{M_{t}}\left(\sum_{i=0}^{N-1} \left(|| z_{i|t+1}^{[j]}||^2_Q + || v_{i|t+1}^{[j]}||^2_R \right)+|| z_{N|t+1}^{[j]}||^2_{P_{t+1}}\right) \nonumber\\
&+\rho_{2_t}(M_{t+1}-M_t)\gamma_1 \hspace{3cm}(\text{using \eqref{gamma_1_bound}})\nonumber\\
&\leq \sum_{j=1}^{M_{t}}\left(\sum_{i=0}^{N-1} \left(|| \underline{z_{i|t+1}^{[j]}}||^2_Q + || \underline{v_{i|t+1}^{[j]}}||^2_R \right)+|| z_{N|t+1}^{[j]}||^2_{P_{t+1}}\right) \nonumber\\
&+\rho_{2_t}(M_{t+1}-M_t)\gamma_1 \nonumber\\
&\hspace{1.5cm}\left(\because \mathbb T^x_{i|t+1}\subseteq \underline{\mathbb T^x_{i|t+1}},\;\textbf{co}(\mathbb T^u_{i|t+1})\subseteq \textbf{co}(\underline{\mathbb T^u_{i|t+1}})\right)\nonumber\\
&=\sum_{j=1}^{M_{t}}\left(\sum_{i=0}^{N-2} \left(|| {z_{i+1|t}^{[j]^*}}||^2_Q + || {v_{i+1|t}^{[j]^*}}||^2_R \right)+|| z_{N|t+1}^{[j]}||^2_{P_{t+1}}\right.\nonumber\\
&\left. +||z_{N|t}^{[j]^*}||^2_{Q+K_t^\top R K_t} \vphantom{\sum_{i=1}^{M}}\right) +\rho_{2_t}(M_{t+1}-M_t)\gamma_1 \;\;(\text{using \eqref{nonada}})\nonumber\\
&=J_t^*(x_t)-\sum_{j=1}^{M_t}\left(|| {z_{0|t}^{[j]^*}}||^2_Q + || {v_{0|t}^{[j]^*}}||^2_R + ||{z_{N|t}^{[j]^*}}||^2_{P_t} \right)\nonumber\\
&+\sum_{j=1}^{M_t}\left(|| (\hat A_{t+1}+\hat B_{t+1}K_{t+1})z_{N|t}^{[j]^*} + w_{t+N}||^2_{P_{t+1}}\right.\nonumber\\
&\left. +||z_{N|t}^{[j]^*}||^2_{Q+K_t^\top R K_t}\right)+\rho_{2_t}(M_{t+1}-M_t)\gamma_1 \nonumber\end{align}\begin{align}
&\leq (1-\gamma_2) J_t^*(x_t)+\sum_{j=1}^{M_{t}} \left( -||  z_{N|{t}}^{[j]^*}||^2_{P_t}+||  z_{N|{t}}^{[j]^*}||^2_{Q+K_t^\top R K_t}\right. \nonumber\\
 &\left.+||  z_{N|{t}}^{[j]^*}||^2_{(\hat A_{t+1}+\hat B_{t+1}K_{t+1})^\top P_{t+1}(\hat A_{t+1}+\hat B_{t+1}K_{t+1})} \right)\nonumber\\
 &+M_t \max_{t\in\mathbb I_0^\infty, \;w\in\mathbb W_t}||w ||^2_{P_{t+1}} +\rho_{2_t}(M_{t+1}-M_t)\gamma_1, \nonumber\\&\hspace{3cm}(\text{by Cauchy-Schwartz inequality})\nonumber
\end{align}
where $\sum_{j=1}^{M_t}\left(|| {z_{0|t}^{[j]^*}}||^2_Q + || {v_{0|t}^{[j]^*}}||^2_R\right)=\gamma_2 J_t^*(x_t)$, implying $\gamma_2\in(0,1]\Rightarrow 1-\gamma_2\in[0,1)\;\;\text{(by definition of $J_t^*$)}.$

Since the COCP is initially feasible and Assumption \ref{Asdare1} holds, by Theorem \ref{theo1}, recursive feasibility is guaranteed. Either Criterion \ref{Crdare2} is satisfied, or we revert to $\hat A_{t+1}=\hat A_t$, $\hat B_{t+1}=\hat B_t$, $P_{t+1}=P_t$ and $K_{t+1}=K_t$ that satisfies \eqref{dare1}. In any case, we have 
\begin{align}
   & P_t-(\hat A_{t+1}+\hat B_{t+1}K_{t+1})^\top P_{t+1}(\hat A_{t+1}+\hat B_{t+1}K_{t+1}) \nonumber\\
   &\hspace{4cm} -Q-K_t^\top R K_t\succeq 0, \nonumber\\
   \Rightarrow & J_{t+1}^*(x_{t+1})\leq (1-\gamma_2)J_{t}^*(x_{t})+\rho_{2_t}(M_{t+1}-M_t)\gamma_1  \nonumber\\&\hspace{4cm}+M_t \max_{t\in\mathbb I_0^\infty, \;w\in\mathbb W_t}||w ||^2_{P_{t+1}}  .\nonumber 
\end{align}
For implementation purposes, we put an upper limit $M_{max}$ on the number of vertices $\mathbb S_t$, depending on the computational power. Due to recursive feasibility and bounded sets $\mathbb D$ and $\Psi_t$, the value of $w_t$ is upper bounded. The elements of $P_{t}$ are also upper bounded, since $P_{t}$ is involved in the cost function. Therefore, we can define a constant $\gamma_3\triangleq M_{max} \max_{t\in\mathbb I_0^\infty, \;w\in\mathbb W_t}||w ||^2_{P_{t}}+(M_{max}-M_{min})\gamma_1$ resulting in
\begin{align}
    J_{t+1}^*(x_{t+1})\leq (1-\gamma_2)J_{t}^*(x_{t})+\gamma_3,
\end{align}
which implies the plant \eqref{osys1} exhibits robust exponential stability. Further, once the learning has converged, $M_{t}$ remains constant and the term $M_{t+1}-M_t$ becomes 0, implying that the state eventually converges to a finite set that depends on the sizes of $\mathbb D$ and $\Psi_t$ and $\hat A_t$, $\hat B_t$ as $t\rightarrow\infty $, all of which are bounded (see Corollary \ref{corolaa}).   
\hfill$\blacksquare$

\section{Numerical Example}
We validate the proposed MPC framework using the following $2^{\text{nd}}$ order LTI system\footnote{A $2^{\text{nd}}$ order system is chosen for ease of visualizing the tubes.}.
\begin{align*}
    x_{t+1}=\begin{bmatrix}
        0.2 & 1.015\\-0.2825 & 1
    \end{bmatrix}x_t+\begin{bmatrix}
        1.08\\3
    \end{bmatrix}u_t+d_t
\end{align*}
with $||d_t||_\infty \leq 0.1$ and constraints $||x_t||_\infty \leq 20$, $||u_t||_\infty \leq 10$. The parametric uncertainty is considered in two elements of $A$ and one element of $B$ with $3$ vertices of $\Psi$ given by {$\psi^{[1]}=\begin{bmatrix}
    0.2 & 1.3 & 0.7&;&-1 & 1 & 3
\end{bmatrix}$, $\psi^{[2]}=\begin{bmatrix}
    0.2 & 1.2 & 0.9&;&-0.25 & 1 & 3
\end{bmatrix}$, $\psi^{[3]}=\begin{bmatrix}
    0.2 & 1 & 1.1 &;&0.35 & 1 & 3
\end{bmatrix}$}. The results in Fig.s \ref{xufig}-\ref{RMSJ} are provided for $x_0=[18\;;-18]$, $N=10$, $Q=\begin{bmatrix}
    1 & 0&;&0 &1
\end{bmatrix}$, $R=0.1$ and $\kappa=0.9$ with $\hat \psi_0=(\psi^{[1]}+\psi^{[2]}+\psi^{[3]})/3$.

\begin{figure}[t]
\centering
\framebox{\parbox{3in}{
\includegraphics[scale=0.415]{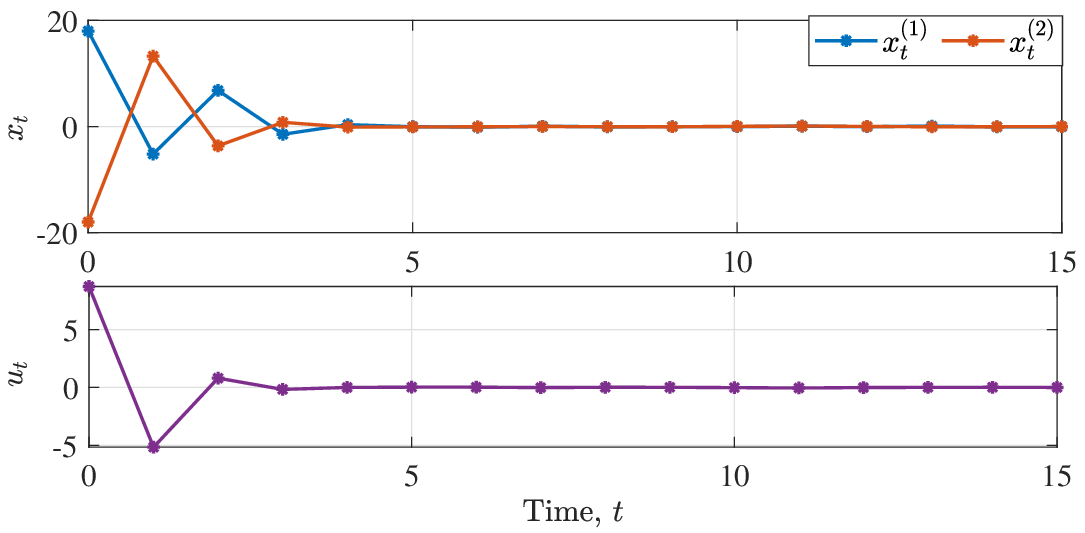}}} 
\caption{State and control input.}  
\label{xufig}   
\end{figure}

\begin{figure}[t]
\centering
\framebox{\parbox{3in}{
\includegraphics[scale=0.417]{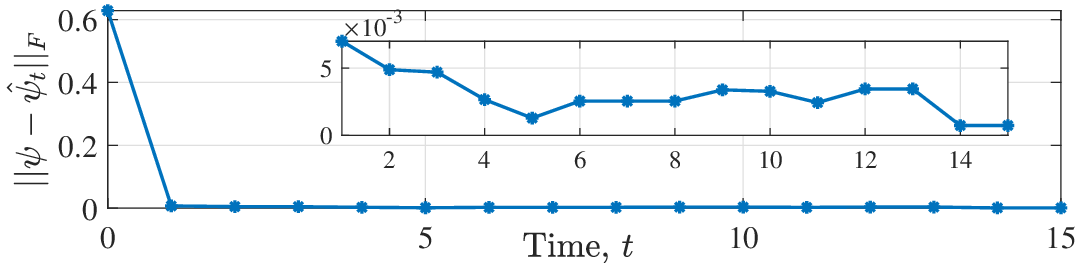}}} 
\caption{Frobenius norm of the parameter estimation error.}  
\label{ABfig}   
\end{figure}
\begin{figure}[t]
\centering
\framebox{\parbox{3in}{
\includegraphics[scale=0.414]{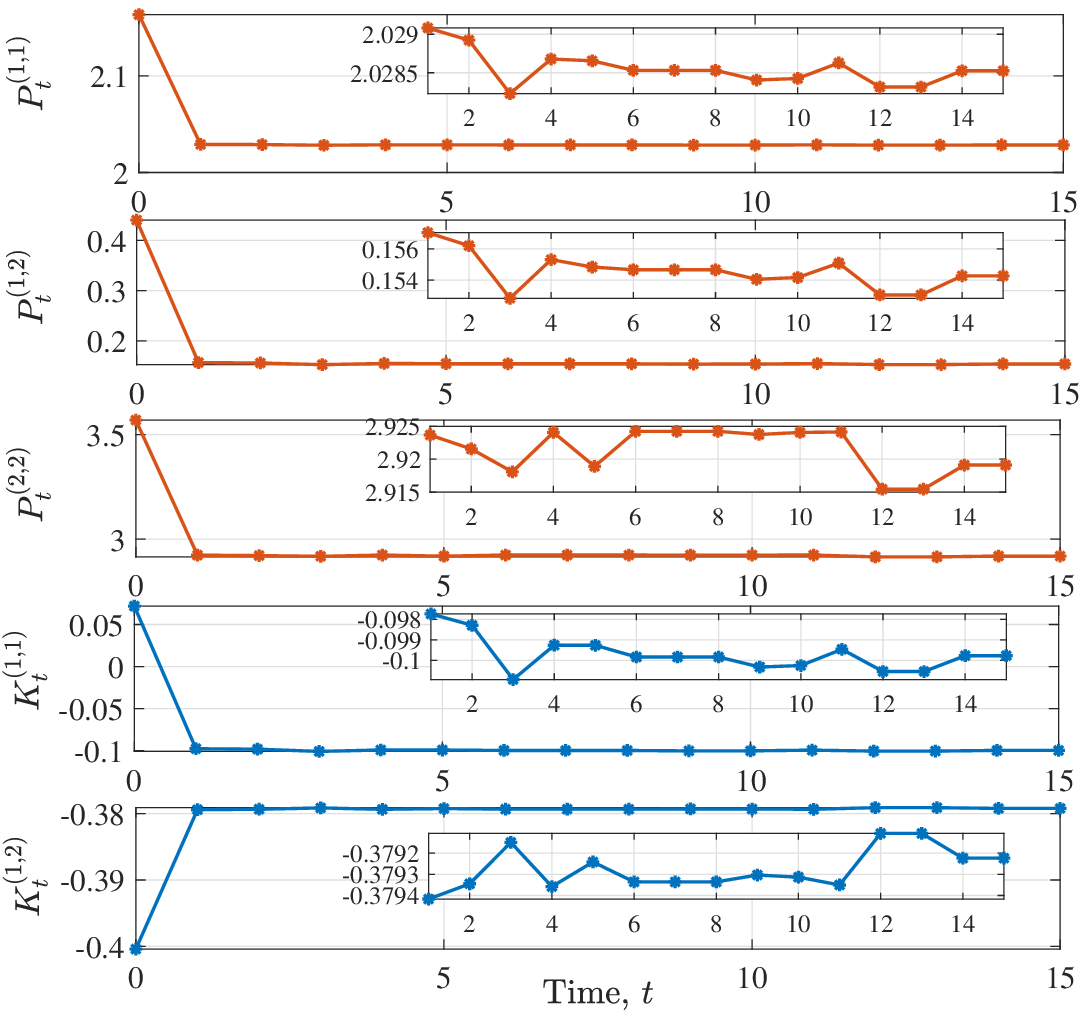}}} 
\caption{Elements of the terminal cost weight matrix, and the stabilizing linear feedback gain.}  
\label{KPfig1}   
\end{figure}

It is seen from Fig. \ref{xufig} that the hard constraints in \eqref{hc1} are satisfied, and the state trajectory converges to the neighbourhood of the origin. The parameter estimation error norm is bounded as shown in Fig. \ref{ABfig}, and the variations in the elements of $P_t$ and $K_t$ are shown in Fig. \ref{KPfig1}. The reduction of the size of the parametric uncertainty set $\Psi_t$ at $t=1,3,11$ is shown in Fig. \ref{psifig}. From Fig. \ref{BPXTSfig} (a), (b) and (c), the adaptation in the shape of the homothetic tube $\mathbb S_t$ and the terminal set $\mathbb X_{TS_t}$ is visible. The shape of $\mathbb S_t$ changes and the size reduces, giving a better characterization of state propagation in the presence of uncertainty $\Psi_t$ and $\mathbb D$, whereas the terminal set size increases thereby reducing the strain on the controller to reach the terminal set.
\begin{figure}[t]
\centering
\framebox{\parbox{3in}{
\includegraphics[scale=0.415]{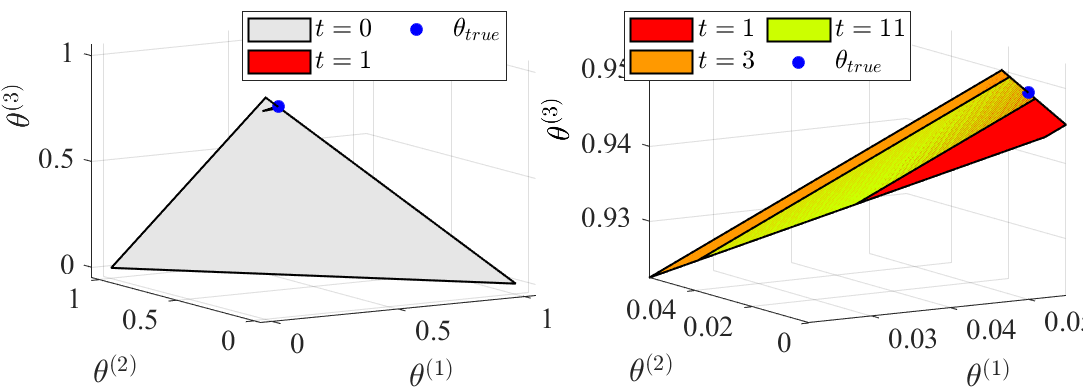} }}
\caption{Barycentric representation of the parametric uncertainty sets containing the convex combination of the true parameter (shown with a blue dot).}  
\label{psifig}     \end{figure}
\begin{figure}[t]
\centering
\framebox{\parbox{3in}{
\includegraphics[scale=0.415]{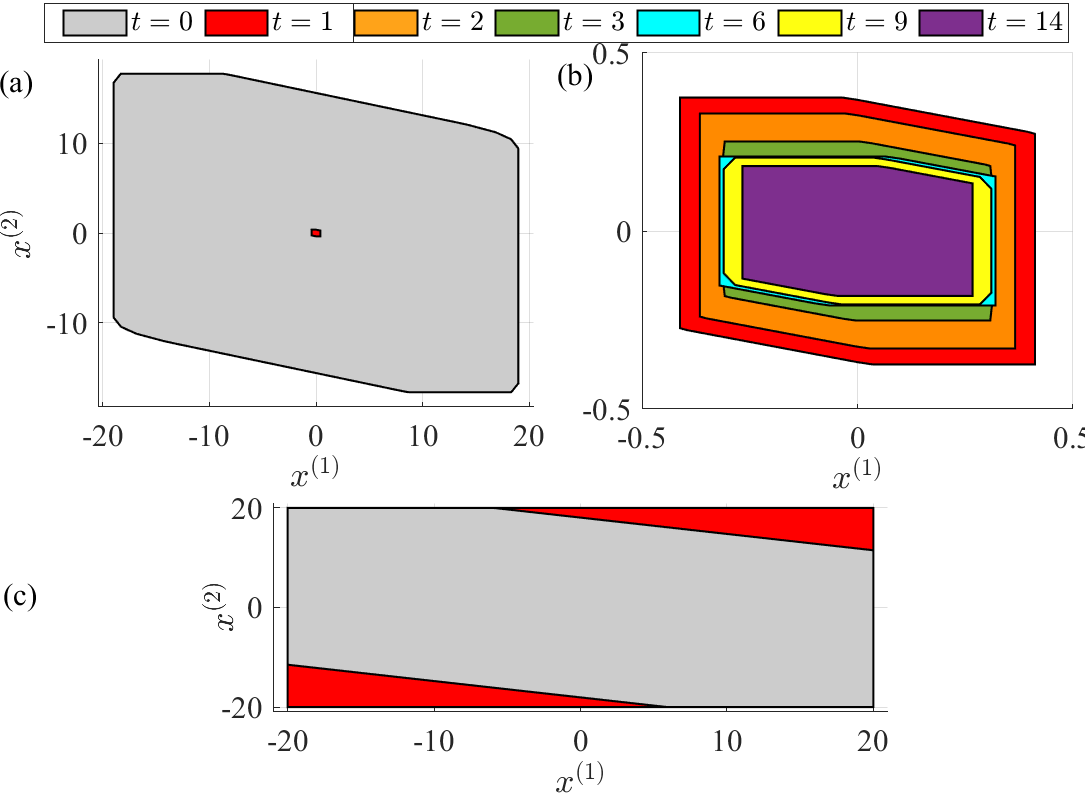}}} 
\caption{(a), (b) The polytopes used to define the cross-sectional shape of the homothetic tubes, (c) The terminal sets.}  
\label{BPXTSfig}   
\end{figure}
To highlight the impact of tube adaptation, the performance of the proposed MPC is compared with the robust homothetic tube MPC scheme of \cite{rakovic2012homothetic}. The comparison focuses on the amount of constraint tightening required to guarantee robustness, the stabilization performance and the associated stage cost.

Fig.~\ref{tubexfig} shows the predicted state tubes at selected time instants. The first and second subplots correspond to the tubes generated at $t=0$ and $t=1$, respectively. The tube produced by the proposed adaptive method (shown in green) is noticeably smaller than the tubes obtained using the robust homothetic tube MPC method of \cite{rakovic2012homothetic} (shown with magenta outline). This reduction in tube size is a direct consequence of the contraction of the parametric uncertainty set through online learning.

The second subplot further shows that the tube generated by the proposed method without adaptation (blue outline) is still smaller than the tube generated following \cite{rakovic2012homothetic}. This is because even without adaptation, the disturbance set characterization is improved at each time step, whereas the robust tube MPC scheme relies on a fixed set that accounts for all admissible disturbance $w_t$, thereby leading to larger tube cross-sections.

The tightening sets associated with the disturbance bounds are illustrated in Fig.s~\ref{fig:ConstraintTigh} (a) and (b) for $t=0$ and $t=1$, respectively. The tightening set used by the method of \cite{rakovic2012homothetic} is shown in grey in Fig.~\ref{fig:ConstraintTigh} (a) and remains fixed over time. In contrast, the tightening sets generated by the proposed method, shown in Fig. \ref{fig:ConstraintTigh} (a) and (b), shrink as the parametric uncertainty set is refined through learning. This reduction enlarges the region available for the state trajectory to evolve while maintaining robust constraint satisfaction.

The improvement in performance is reflected in Fig.~\ref{RMSJ}. The figure shows the evolution of the state norm $\|x_t\|_2$ and the cumulative stage cost for the proposed method and the robust tube MPC scheme of \cite{rakovic2012homothetic}. The proposed adaptive scheme achieves a faster reduction of the state norm, indicating improved stabilization performance. In addition, the cumulative stage cost is lower, reflecting a reduction in control effort enabled by the improved model representation obtained through online learning.

\begin{figure}[t]
\centering
\framebox{\parbox{3in}{
\includegraphics[scale=0.416]{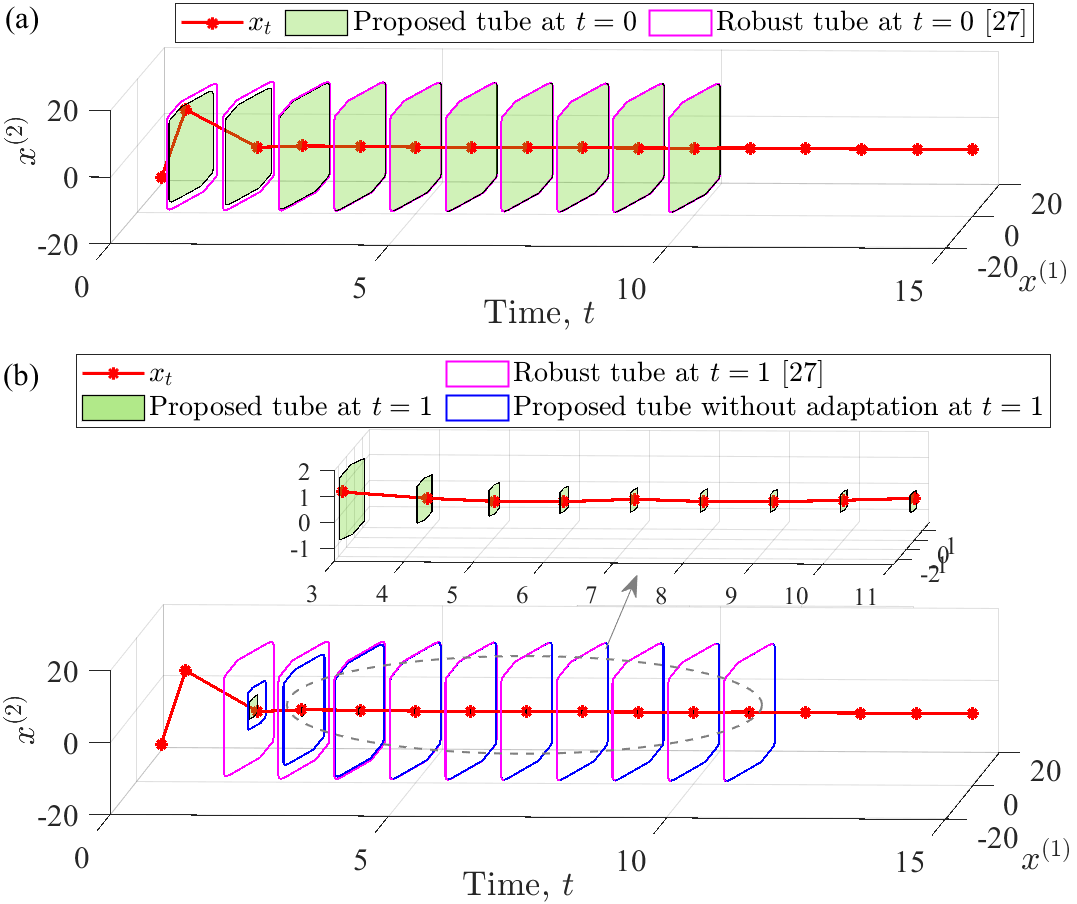}}} 
\caption{Tubes generated at $t=0$ and $t=1$ following the proposed framework and \cite{rakovic2012homothetic}, along with the state trajectory generated with the proposed method.}  
\label{tubexfig}   
\end{figure}
\begin{figure}[t]
\centering
\framebox{\parbox{3in}{
\includegraphics[scale=0.415]{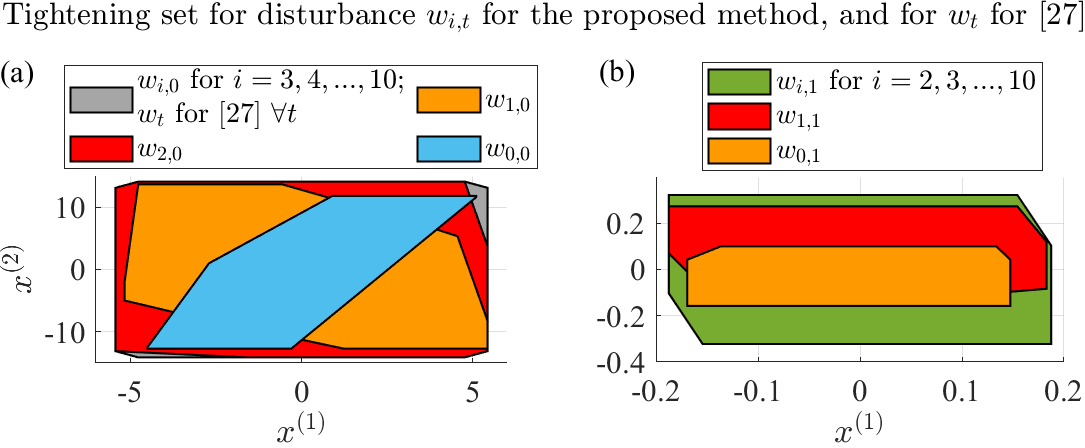}}} 
\caption{Sets used for constraint tightening corresponding to the disturbance bounds for the proposed adaptive method and the robust tube MPC \cite{rakovic2012homothetic}.}  
\label{fig:ConstraintTigh}   
\end{figure}
\begin{figure}[t]
\centering
\framebox{\parbox{3in}{
\includegraphics[scale=0.415]{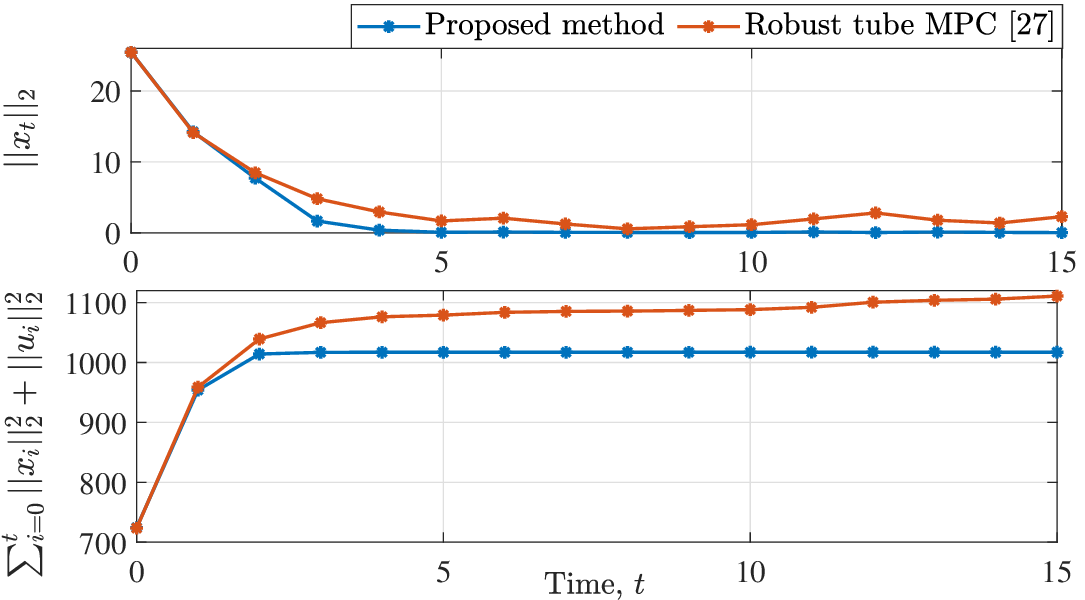}}} 
\caption{Comparison of $||x_t||_2$ and the cumulative stage cost for the proposed method and \cite{rakovic2012homothetic}.}  
\label{RMSJ}   
\end{figure}
These results demonstrate that adapting the tube geometry and other COCP components through online refinement of the uncertainty set significantly reduces conservatism while maintaining robust constraint satisfaction.
\section{Conclusion}
This paper introduces an adaptive tube framework for MPC of discrete-time LTI systems subject to parametric uncertainty and additive disturbances. The approach integrates homothetic tube-based MPC with set-membership identification, allowing the parametric uncertainty set and parameter estimates to be updated online using available state and input data. As the uncertainty set contracts, the tube cross-sections adapt accordingly, leading to less conservative constraint tightening and state propagation. It is formally established that the resulting adaptive tube MPC scheme guarantees recursive feasibility, robust exponential stability, and boundedness of all signals, despite the time-varying nature of the COCP induced by adaptation. A key theoretical feature of the framework is that it avoids the standard assumption of a common quadratically stabilizing linear feedback gain for the entire parametric uncertainty set, instead requiring only a one-step Lyapunov compatibility condition between consecutive parameter updates. A backup mechanism ensures that these guarantees are preserved even when the adaptive updates render the modified COCP infeasible. Numerical results demonstrate the benefits of the adaptive tube construction in reducing conservatism and improving performance. Future work will focus on reducing the computational burden associated with online recomputation of tube and terminal ingredients and extending the framework to the output feedback scenario.

% OR

%\begin{figure}
%\begin{center}
%\epsfig{file=jcaesar,width=7cm}
%\caption{Gaius Julius Caesar, 100--44 B.C.}
%\label{fig1}
%\end{center}
%\end{figure}

% \begin{ack}                               % Place acknowledgements
% .  % here.
% \end{ack}

\bibliographystyle{plain}        
\bibliography{autosam}

% \appendix
% \section{Computing the non-falsified set}\label{apndx1}
% Since $\Psi$ is a convex set, the true parameter can be written as 
% \begin{align}
%     \psi=\sum_{i=1}^{L}\theta^{[i]} \psi^{[i]}, \;\;\sum_{i=1}^{L}\theta^{[i]} =1,\;\theta^{[i]}\in[0,1]
% \end{align} where $\theta\triangleq \begin{bmatrix}
%     \theta^{[1]} & \theta^{[1]}  & ... & \theta^{[L]} 
% \end{bmatrix}^\top\in\Theta$, $\Theta\triangleq\{ \eta\in\mathbb R^L\;\middle|\; \}$ is unique for $\psi$ and $\Psi$. This implies there is a bijection between the vertices of $\Psi$  In fact, for any of the updated sets $\Psi_t$, the vertices
% \section{Some Latin vocabulary}       
                                       
\end{document}